\documentclass[sigconf]{acmart}

\usepackage{textcomp}
\usepackage{booktabs}   
\usepackage{subcaption} 
\usepackage{etoolbox}   
\usepackage{xcolor}     
\usepackage{todonotes}  
\usepackage{xspace}     
\usepackage{multirow}   
\usepackage{multicol}   
\usepackage{amsmath}
\usepackage{amsfonts}
\usepackage{amssymb}
\usepackage{amsthm}
\usepackage{algorithm}  
\usepackage[noend]{algpseudocode}  
\usepackage{tikz}       
\usepackage{pifont}     
\usepackage{enumitem}   
\usepackage{listings}
\usepackage{microtype}  
\usepackage{marvosym}   

\usepackage{float}
\newfloat{algorithm}{t}{lop}

\makeatletter
\g@addto@macro{\UrlBreaks}{\UrlOrds\do\/\do-\do.}
\makeatother


\presetkeys{todonotes}{inline}{}
\setlist{noitemsep,topsep=0pt,parsep=0pt,partopsep=0pt,leftmargin=*}

\newtoggle{printnotes}   
\newtoggle{printtodos}   
\newtoggle{printfullbisectbiggest} 
\togglefalse{printnotes} 
\togglefalse{printtodos} 
\togglefalse{printfullbisectbiggest}

\settopmatter{%
  printccs=false,
  printacmref=false,
  printfolios=false} 

\newcommand{\flit}{FLiT\xspace}
\newcommand{\mfem}{MFEM\xspace}
\newcommand{\fail}{{\color{red}\ding{56}}\xspace}
\newcommand{\pass}{{\color{green!80!black}\ding{52}}\xspace}
\newcommand{\D}{\tikz{\node[shape=circle, inner sep=.5pt]{\scriptsize\textsc{x}};}}
\newcommand{\p}{$\cdot$\xspace}
\newcommand{\Test}{\textup{\textsc{Test}}\xspace}

\newcommand{\found}{f\!ound}

\newcommand*\circled[1]{\!\!\tikz[baseline=(char.base)]{
            \node[shape=circle,draw,inner sep=.5pt] (char) {#1};}\!\!}

\newtheorem{definition}{Definition}
\newtheorem{theorem}{Theorem}
\newtheorem{assumption}{Assumption}

\newcommand{\personaltodo}[4]{{\color{#3}\todo[color=#2]{\textbf{#1:} #4}}}
\newcommand{\miketodo}[1]{\personaltodo{MIKE}{red!60!black}{white}{#1}}

\def\BibTeX{{\rm B\kern-.05em{\sc i\kern-.025em b}\kern-.08em
    T\kern-.1667em\lower.7ex\hbox{E}\kern-.125emX}}

\iftoggle{printtodos}{}{
  \presetkeys{todonotes}{disable}{}
  \renewcommand{\missingfigure}[1]{}
  
}

%
{                           %
  \iftoggle{\@togglevar}{   %
    \endgroup               %
  }{                        %
    \egroup                 %
  }                         %
}                           %

{                                   %
  \myprintable{printnotes}          %
    \par                            %
    \color{blue}                    %
}{                                  %
  \endmyprintable                   %
}                                   %

%
%

\fancyhead{}  
\fancyfoot{}  
\cfoot{\thepage}  

\begin{document}

\copyrightyear{2019}
\acmYear{2019}
\setcopyright{acmcopyright}

\acmConference[HPDC '19]%
{The 28th International Symposium on High-Performance Parallel and Distributed Computing}%
{June 22--29, 2019}%
{Phoenix, AZ, USA}

\acmBooktitle{The 28th International Symposium on High-Performance Parallel and Distributed Computing (HPDC '19), June 22--29, 2019, Phoenix, AZ, USA}

\acmPrice{15.00}
\acmDOI{10.1145/3307681.3325960}
\acmISBN{978-1-4503-6670-0/19/06}

\title{
  Multi-Level Analysis of Compiler-Induced Variability and Performance
  Tradeoffs
  }

\author{Michael Bentley}
\email{mbentley@cs.utah.edu}
\author{Ian Briggs}
\email{ianbriggsutah@gmail.com}
\author{Ganesh Gopalakrishnan}
\email{ganesh@cs.utah.edu}
\affiliation{\institution{University of Utah}}

\author{Dong H. Ahn}
\email{ahn1@llnl.gov}
\author{Ignacio Laguna}
\email{lagunaperalt1@llnl.gov}
\author{Gregory L. Lee}
\email{lee218@llnl.gov}
\author{Holger E. Jones}
\email{jones19@llnl.gov}
\affiliation{\institution{Lawrence Livermore National Laboratory}}

\renewcommand{\shortauthors}{Michael~Bentley,~et~al.}

\begin{abstract}

Successful HPC software applications are long-lived.
When ported across machines and their compilers,
these applications often produce different numerical results,
many of which are unacceptable.
Such variability is also a concern while
optimizing the code more aggressively to gain performance.
Efficient tools that help locate the program units (files and functions)
within which most of the variability occurs are badly needed,
both to plan for code ports and to root-cause errors due
to variability when they happen in the field.
In this work, we offer an enhanced version of the open-source
testing framework \flit to serve these roles.
Key new features of \flit include a suite of bisection algorithms
that help locate the root causes of variability.
Another added feature allows an analysis of the tradeoffs between performance
and the degree of variability.
Our new contributions also include a collection of case
studies.
Results on the \mfem finite-element library
include variability/performance
tradeoffs, and the identification of a (hitherto unknown) abnormal
level of result-variability even under mild compiler optimizations.
Results from studying the Laghos proxy application
include identifying a significantly divergent floating-point result-variability
and successful root-causing down to the problematic
function over as little as 14 program executions.
Finally, in an evaluation of 4,376 controlled injections of floating-point
perturbations on the LULESH proxy application, we showed
that the \flit framework has 100\% precision and recall in
discovering the file and function locations of the injections
all within an average of only 15 program executions.
\end{abstract}

\begin{CCSXML}
<ccs2012>
<concept>
<concept_id>10011007.10011074.10011099.10011102.10011103</concept_id>
<concept_desc>Software and its engineering~Software testing and debugging</concept_desc>
<concept_significance>500</concept_significance>
</concept>
<concept>
<concept_id>10011007.10011006.10011041</concept_id>
<concept_desc>Software and its engineering~Compilers</concept_desc>
<concept_significance>300</concept_significance>
</concept>
<concept>
<concept_id>10011007.10011006.10011073</concept_id>
<concept_desc>Software and its engineering~Software maintenance tools</concept_desc>
<concept_significance>300</concept_significance>
</concept>
<concept>
<concept_id>10011007.10011006.10011066.10011067</concept_id>
<concept_desc>Software and its engineering~Object oriented frameworks</concept_desc>
<concept_significance>100</concept_significance>
</concept>
<concept>
<concept_id>10011007.10011074.10011099.10011105</concept_id>
<concept_desc>Software and its engineering~Process validation</concept_desc>
<concept_significance>100</concept_significance>
</concept>
</ccs2012>
\end{CCSXML}

\ccsdesc[500]{Software and its engineering~Software testing and debugging}
\ccsdesc[300]{Software and its engineering~Software maintenance tools}
\ccsdesc[100]{Software and its engineering~Object oriented frameworks}
\ccsdesc[100]{Software and its engineering~Process validation}
\ccsdesc[100]{Software and its engineering~Compilers}

\keywords{debugging, compilers, code optimization, reproducibility, performance tuning}

\begin{teaserfigure}
  \centering
  \includegraphics[width=.9\textwidth]{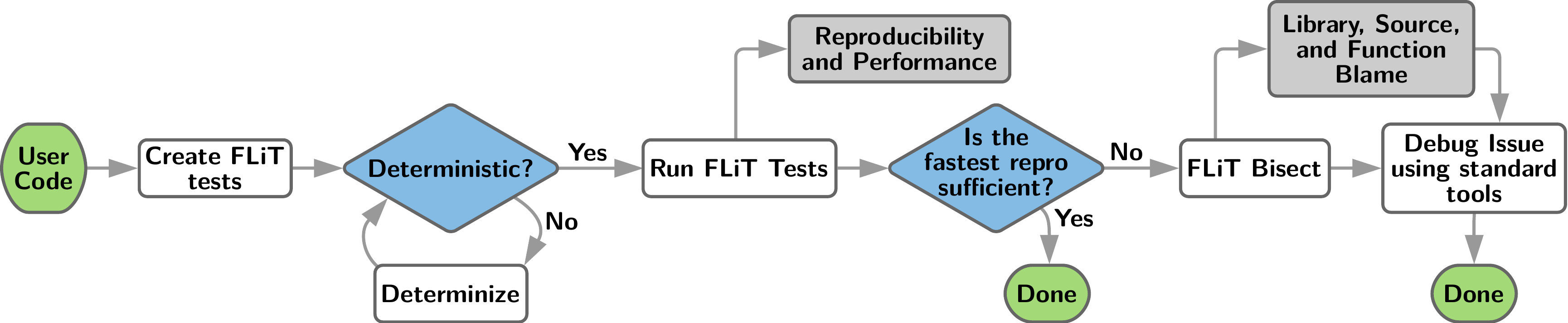}
  \Description[Multi-level workflow]{
    User code is input into creating FLiT tests.
    Then the user checks if each test is deterministic.
    If not, determinize it.
    After each test is deterministic, you run the FLiT tests which generates
    reproducibility and performance information.
    From that information, is the fastest reproducible compilation sufficiently
    fast?
    If so, then you are done.
    Otherwise, run FLiT Bisect, which isolates blame to libraries, source
    files, and functions.
    This information can then be used to debug the issue using standard tools.
    }
  \caption{
    Multi-level workflow.
    Levels are
    (1) determine variability-inducing compilations,
    (2) analyze the space of reproducibility and performance, and
    (3) debug variability by identifying files and functions causing
        variability.
    }
  \label{fig:workflow}
\end{teaserfigure}

\maketitle

\section{Introduction}                       
\label{sec:intro}
Tools and techniques that mitigate the effects of compiler-induced result-variability
are increasingly important to preserve the value of our investments in scientific
software.
As a specific example, long-lived scientific applications must be able
to take advantage of different (or newer) machines and their compilers
(as well as their optimization flags) while maintaining result integrity and
achieving higher performance.
Unfortunately, there are currently no 
techniques and tools that help designers debug field issues
that arise during such code ports,
especially in the context of large codebases and thousands of functions.
%
%
At present, designers end up wasting their time
by manually debugging field issues.
Also, code that is shipped without portability testing
may harbor the potential
to generate unacceptably
significant result variations even under
standard optimization.
An incident of this type was reported by
designers of the Community Earth System Model
(a large-scale climate simulation)~\cite{baker-work}
where the problem was noticed while porting code
to a new machine.
After weeks of painstaking investigations,
the problem turned out to be the
{\em introduction} of fused-multiply-add instructions
by the compiler, taking advantage of this new capability
offered by their target architecture.
This and other incidents reported in this paper underscore the
need for an ecosystem of freely available tools that can
help scientific programmers.
To the best of our knowledge, \flit is the first such tool.

\textbf{Definition of Reproducibility.}
Given the growing heterogeneity of hardware and software,
one cannot always define reproducibility as achieving
bitwise reproducible results.
Instead, we view a reproducible computation as one that produces a result within
an ``acceptable range'' of a trusted baseline answer.
In \flit, we rely on the application developer to provide
an acceptance testing function that (indirectly) defines this range.

\textbf{A Motivating Problem.}
Scientific HPC applications can be large and complex,
often simulating physical phenomena
for which expected outcomes are not known.
As a result, there is a particular compilation configuration 
that is trusted because it has passed the test of time 
(\textit{i.e.}, it is believed to be correct from the first version), 
and is considered the {\em baseline compilation configuration}.

When developers port applications to a different compiler or
a new version of the same compiler,
all {\em acceptably good} compilation configurations must
deliver answers empirically close to the baseline,
either based on designer experience
or in a more rigorous mathematical sense, such as
meeting an error norm.
When results deviate from acceptable levels,
support tools must help locate the issue within a short distance of the
root cause.

\textbf{Our Contributions.}
\begin{enumerate}
\item A significantly extended version of the \flit~\cite{paper:flit-iiswc} testing tool%
    \footnote{
      This version of \flit source code is available at
      \url{https://github.com/PRUNERS/FLiT.git}
      }
    that is now capable of handling real applications that
    are either sequential or contain deterministic OpenMP or
    MPI code.
\item Results that capture how performance varies versus reproducibility
      on non-trivial applications.

\item A suite of novel bisection algorithms that help identify code locations 
    responsible for result-variability,

\item A workflow
  (Figure~\ref{fig:workflow}) providing steps
  for practitioners to analyze result and performance variability.

\item Experimental validation using real-world HPC miniature applications that
      include
    Laghos~\cite{paper:laghos} and LULESH~\cite{LULESH2:changes}, and the \mfem
    library~\cite{mfem-library}.
    These studies quantitatively evaluate the effectiveness of
    our Bisect algorithms as well as empirically assess the real-world
    applicability of the workflow.

\end{enumerate}

%
%

\textbf{New \flit features.}
The new \flit features are: 
\begin{enumerate}
\item A multi-level analysis workflow supported by \flit
(Figure~\ref{fig:workflow}), resulting in 
root-cause analysis of compiler-induced
result-variability down to individual source files
and functions. Root-causing is achieved by
 FLiT's Bisect algorithms (\S\ref{sec:workflow}).

\item An assessment of the efficacy
    of Bisect on real applications and a
    fault injection study (\S\ref{sec:experimental-results});

\item The results of applying \flit, for the first time,
    on two real-world systems: MFEM and Laghos
    (\S\ref{sec:experimental-results}).
\end{enumerate}

\textbf{Compiler-Induced Variability Example.}
Compiler-induced variability is widely experienced but
seldom systematically solved.
We provide an example to help the reader better understand
the utility of a tool such as \flit.
At one stage of the development of Laghos, an open-source
simulator of compressible gas dynamics \cite{laghos-paper}, the project scientists 
were seeking higher optimizations provided by the IBM compiler, \texttt{xlc}.
Moving from optimization level \texttt{-O2} to \texttt{-O3}, the
$\ell_2$ norm of the energy over the mesh went from
129,664.9 to 144,174.9 {\em in a single iteration} --- an 11.2\% relative
difference caused merely by the optimizations.
One would expect variability around $10^{-8}\%$ or less.
Also, the density of the simulated gas became negative --- a physical
impossibility.
Even more striking was the
runtime difference: from 51.5 seconds to 21.3 seconds for the first iteration,
which is a speedup by a factor of 2.42.
In Section~\ref{sec:experimental-results}, we describe how \flit came
to the Laghos designers' rescue.

\textbf{Paper Organization and Result Highlights.}
In Section~\ref{sec:workflow}, we introduce our 
multi-level
analysis workflow and tooling
spread over three phases.
The first phase identifies which compiler
optimizations cause reproducibility problems.
The second phase helps to analyze the performance resulting
from the optimizations, thus assisting the programmer in arriving at
the most performant of acceptable solutions.
The third phase helps characterize
which functions within the code exhibit variability under compiler
optimizations, sorted by the most influential.
The last phase involves our suite of bisection algorithms.

We contribute two key assumptions that help make bisection
practical:
(1)~The \textit{Unique Error} assumption, meaning
for a particular value of variability, the
set of responsible application functions is unique.
This assumption frequently holds in practice,
as demonstrated by our results.
Without this assumption,
we will have an exponential search problem to solve.
(2)~The {\em Singleton Blame Site} assumption, which
means that a single file or function, by itself, causes variability.
In other words, it is not necessary to have two or more files
or functions to be jointly acting to induce variability.
This assumption also holds in practice, as demonstrated by our results.
The Bisect algorithm has a built-in dynamic verification assertion
that verifies this assumption.
Section~\ref{subsec:bisect-algorithm} explains how
these assumptions are central to achieving an overall $O(k \log(N))$ runtime
complexity (for $k$ ``problematic'' files/symbols) as opposed to
the $O(2^N)$ complexity, if we were to relax these assumptions.

In navigating performance and reproducibility in the \mfem library
(Section~\ref{sec:experimental-results}),
we found that 14 of 19 examples exhibited the highest
speedups with compilations that are bitwise reproducible.
Two of those 14 showed bitwise reproducibility across all tested
compilations.
These results indicate reproducibility need not always be sacrificed
for performance gains.
We demonstrate
our Bisect algorithm on {\em all} found variability-inducing compilations
from \mfem to evaluate the effectiveness of Bisect and to empirically
characterize the proclivity of a compiler to introduce variability.
For \mfem, we provide the ``best average compilation'' for each compiler
over the set of 19 \mfem examples, along with a rough idea of how often each
compiler induces variability.
Also, thanks to \flit, we have located
an unexpected result deviation in one test of \mfem which
resulted in a 180\% relative error under a mild compiler optimization.
\flit could root-cause this failure to a single function.

\flit could also discover and root-cause
a known reproducibility bug in the Laghos proxy application.
The benefit of \flit is the automated re-discovery
of this critical bug
(first located through a two week {\em ad hoc} manual search).
This automated re-discovery took only 14 application runs under Bisect,
taking only 40 minutes.

To quantify the efficacy of our Bisect algorithm even more sharply,
we implemented a custom LLVM pass to inject floating-point
perturbations in the LULESH proxy application.
We achieved precision and recall of 100\%
at identifying the source of variability, or reporting that the injection
was benign and caused no variability.
Each injection took only 15 application executions on
average during the
Bisect search to find the
function exhibiting variability.



\section{Workflow for Multi-Level Analysis}  
\label{sec:workflow}
Key to the design of \flit is a choice of
approaches and algorithms that are essential to making an impact
in today's HPC contexts.
We now present some of these choices and describe
the workflow in Figure~\ref{fig:workflow}.

\label{page:compilation-def}
We define a {\bf compilation} as a triple (Compiler, Optimization Level, Switches)
applied to a subset of source files in an application.
This triple
contains the full configuration of how to compile a source file -- as
far as optimizations and compiler options are concerned.
Our work helps hunt down compilations that cause result-variability.

\paragraph{\bf Handling vendor-specific and general-purpose compilers\/}
Vendor-provided compilers are vital to achieving high
performance, especially within newly delivered HPC machines.
Given this, \flit cannot rely on
technologies that do not generalize to many compilers and architectures.
Some such technologies are
binary instrumentation tools such as PIN
(for Intel architectures)
and instrumentation passes based on LLVM
(for LLVM-based compilers only).

\paragraph{\bf Applicability in HPC build systems\/}
Productivity-oriented approaches in HPC critically depend
on infrastructures such as Kokkos~\cite{kokkos} and RAJA~\cite{raja}
that synthesize efficient code, naturally affect loop optimizations,
and smoothly incorporate parallelism.
Framework-specific annotations burden static analysis based approaches because
each framework requires separate support and implementation.
\flit avoids this by dealing with compiled object files directly.

\paragraph{\bf Use designer-provided tests and acceptance criteria\/}
A gen\-eric tool such as \flit cannot have pre-built notions of which
results are acceptable.
Therefore \flit engineers its solutions around C++ features to require
a minimal amount of customization.
For each test, the user creates a class and defines four methods:
\begin{itemize}
  \item \textbf{\texttt{getInputsPerRun:}}
    Simply returns an integer -- The number of floating-point values taken by
    the test as input (between 0 and the maximum value of \texttt{size\_t})
  \item \textbf{\texttt{getDefaultInput:}}
    Returns a vector of test input values.
    If there are more
    values here than specified in \texttt{getInputsPerRun}, then the input
    is split up, and the test is executed multiple times, thus allowing
    data-driven testing \cite{data-driven-testing}.
  \item \textbf{\texttt{run\_impl:}}
    The actual test that takes a vector of floating-point
    values as input and returns a test result.  The test result can either be a
    single floating-point value, or a \texttt{std::string}.
    \flit provides the return type of \texttt{std::string}
    so that the user can use more complex structures
    returned, such as arbitrary meshes.
  \item \textbf{\texttt{compare:}}
    Takes in the test values from the baseline and testing compilations, and
    returns a single floating-point value.  If the two values are considered
    equal, then this function should return 0.  Otherwise, this function should
    return a positive value.  This function behaves as a metric between the two
    values and is how \flit determines if there is variability
    in a compilation compared to the baseline.

    There are two variants of this compare function, one for \texttt{long
    double} values and another for \texttt{std::string} values.
    The user is only required to
    implement the associated variant for the return type of their test.
\end{itemize}

\flit requires deterministic executions, as shown in
Figure~\ref{fig:workflow}.
On a given platform and input, we must be able
to rerun an application and obtain the same results
as measured by the user-provided \texttt{compare} function.
There are many deterministic HPC applications, even many MPI and OpenMP
applications that provide run-to-run reproducibility.
Therefore, \flit supports the use of deterministic MPI and OpenMP.
As depicted in Figure~\ref{fig:workflow},
if an application is not deterministic, then external methods can be used to make it deterministic.
For example,
one can identify and fix races with a race detector such as
Archer~\cite{archer},
or directly determinize an execution using a capture-playback framework
such as ReMPI~\cite{rempi}.

Currently, support for GPUs does not exist in \flit.
With GPUs, the scheduling of warps can
cause floating-point reassociations, thus changing execution results.
\footnote{
 There is little external control
  one can exert on GPU warp schedulers.
  }.
Given the rapid evolutions in the GPU-space, this is future work%


\subsection{Bisect Problem}
\label{subsec:bisect}


The Bisect problem handled by \flit is multifaceted:
it must help
locate  variability-inducing compilations
while also checking for acceptable execution results.
Unfortunately, modern compilers are quite
sophisticated, and their internal operation involves
many decisions such as link-time
library substitutions, the ability (or lack of) to leverage new
hardware resources, and many more such options that
affect either performance or the execution results.
This richness forces us to adopt an approach
that is as generic as possible
and consists of compiling different files at different
optimizations and drawing a final linked image from this mixture.
The granularity of mixing versions in our case is either
at a file level, or
(by using weak symbols and overriding)
at a function level%
\footnote{%
  The approach of searching by overriding symbols is one that potentially
  creates ``Frankenbinaries.''
  For example, we may link together an Intel-compiled function with a
  GCC-compiled function at differing optimization levels.
  Our symbol-based search consists of first creating various binaries (a
  one-time cost) and merely going through different linkage combinations -
  which typically takes far less time than a compilation.
}.
When we encounter a numerical result difference during our bisection search,
we allow existing tools to help with root-causing.
Thus \flit's task is to isolate the problem down to a file
or a function.

An essential practical reality is that
hundreds of functions comprise a large application spread over
multiple files.
It is possible that the compiler optimization may
have affected any subset of these functions to
cause the observed variability.
The objective of \flit's Bisect algorithm is to identify and isolate all
functions that have contributed to result-variability.

In a general sense, one faces the daunting
prospect of identifying those functions that are ``coupled,''
meaning
they must be optimized together in a certain way to
cause result-variability.
The need to identify ``coupled'' functions
would lead to a search algorithm that considers
all possible subsets of files or functions --- an exponential
problem that, if implemented as such,
would result in a very slow tool.
The singleton blame site assumption alluded to earlier reduces the search
space considerably, as discussed in more depth in
Section~\ref{subsec:bisect-analysis}.

\begin{algorithm}
  \caption{Bisect Algorithm}
  \label{alg:bisectall}
%
%
%
  \begin{tabular}{c}
  \begin{minipage}{.95\columnwidth}
  \begin{algorithmic}[1]
    \Procedure{BisectAll}{\Test, $items$}
      \State{$\found \gets \{~\}$}
      \State{$T \gets \textsc{Copy}(items)$}
      \While{$\Test(T) > 0$}
        \State{$G, next \gets \textsc{BisectOne}(\Test,~T)$}
        \State{$\found \gets \found \cup next$}
        \State{$T \gets T \setminus G$}
          \label{alg:line:remove}
      \EndWhile
      \State{\textbf{assert} $\Test(items) = \Test(\found)$}
        \label{alg:line:assert}
      \State{\Return $\found$}
    \EndProcedure
  \end{algorithmic}
  \end{minipage}
  \\
  \midrule
  \begin{minipage}{.95\columnwidth}
  \begin{algorithmic}[1]
    \Procedure{BisectOne}{\Test, $items$}
      \If{$\textsc{Size}(items) = 1$}
        \Comment{base case}
        \State{\textbf{assert} $\Test(items) > 0$}
          \label{alg:line:assert-bisectone}
        \State{\Return $items, items$}
      \EndIf
      \State{$\Delta_1, \Delta_2 \gets \textsc{SplitInHalf}(items)$}
      \If{$\Test(\Delta_1) > 0$}
        \State{\Return $\textsc{BisectOne}(\Test, \Delta_1)$}
      \Else
        \State{$G, next \gets \textsc{BisectOne}(\Test, \Delta_2)$}
        \State{\Return $G \cup \Delta_1, next$}
      \EndIf
    \EndProcedure
  \end{algorithmic}
  \end{minipage}
  \end{tabular}
\end{algorithm}

\subsection{Bisect Algorithm}
\label{subsec:bisect-algorithm}

The Bisect algorithm
(Algorithm~\ref{alg:bisectall})
follows a simple divide and conquer approach.
It takes two inputs:
(1)~$items$, which is a {\bf set} of files/functions
in the compilations to be searched over;
and
(2)~A test function \Test
that maps $items$ to a real value that is
greater than or equal to 0.
A non-zero output indicates the existence of result
variability and also
helps us sort the problematic items (files and functions)
in order of the {\em degree of variability} they induce by themselves.
It also allows us to
formulate
\iftoggle{printfullbisectbiggest}{
  Algorithm~\ref{alg:bisectbiggestk}
}{
  the \textsc{BisectBiggest} algorithm
}
(discussed in Section~\ref{subsec:bisect-biggest}).
A zero output indicates that there is no result-variability.

\begin{figure}
  \small
  \centering
  \begin{tabular}{c|cccccccccc|c}
    \toprule
    Step & \multicolumn{10}{|c|}{
      $items$ fed to \Test in Algorithm~\ref{alg:bisectall}
      }                                                    & result \\
    \midrule
    1    & 1  & 2  & 3  & 4  & 5  & 6  & 7  & 8  & 9  & 10 & \fail  \\
    2    & 1  & 2  & 3  & 4  & 5  & \p & \p & \p & \p & \p & \fail  \\
    3    & 1  & 2  & \p & \p & \p & \p & \p & \p & \p & \p & \fail  \\
    4    & 1  & \p & \p & \p & \p & \p & \p & \p & \p & \p & \pass  \\
    5    & \p & \circled{2}
                   & \p & \p & \p & \p & \p & \p & \p & \p & \fail  \\
    \midrule
    6    & \D & \D & 3  & 4  & 5  & 6  & 7  & 8  & 9  & 10 & \fail  \\
    7    & \D & \D & 3  & 4  & 5  & 6  & \p & \p & \p & \p & \pass  \\
    8    & \D & \D & \p & \p & \p & \p & 7  & 8  & \p & \p & \fail  \\
    9    & \D & \D & \p & \p & \p & \p & 7  & \p & \p & \p & \pass  \\
    10   & \D & \D & \p & \p & \p & \p & \p & \circled{8}
                                                 & \p & \p & \fail  \\
    \midrule
    11   & \D & \D & \D & \D & \D & \D & \D & \D & 9  & 10 & \fail  \\
    12   & \D & \D & \D & \D & \D & \D & \D & \D & \circled{9}
                                                      & \p & \fail  \\
    \midrule
    13   & \D & \D & \D & \D & \D & \D & \D & \D & \D & 10 & \pass  \\
    \midrule
    Result
         &    & \circled{2} 
                   &    &    &    &    &    & \circled{8}
                                                 & \circled{9} 
                                                      &    &        \\
    \bottomrule
  \end{tabular}
  \Description[Example of the BisectAll algorithm]{
    This is an example of the BisectAll algorithm on ten items.
    Out of ten elements with some causing the test function to fail, our goal
    is to find those elements causing variability.
    This figure shows how the BisectAll algorithm would find elements 2, 8, and
    9 as the elements causing the test function to fail.
    In total it takes thirteen steps to find these three elements and rule out
    all others.
    Throughout these thirteen steps, the BisectOne procedure was invoked four
    times, each time being quicker since the search space decreases each time.
  }
  \caption{
    Illustrative example of \textsc{BisectAll} (Algorithm~\ref{alg:bisectall}).
    The numbers represent tested elements.
    The dots represent elements within the current search space, but not being
    tested.
    The small x's represent elements that have been removed from the search
    space because of previous iterations of Bisect.
    The \fail means $\Test(items) > 0$ and \pass means $\Test(items) = 0$.
    The found variability-inducing items are $\{2, 8, 9\}$.
    Each row represents a separate executable
    by linking together
    the items under test from the variable compilation
    and all others from the baseline compilation.
    }
  \label{fig:bisect-example}
\end{figure}

Notice
that procedure \textsc{BisectOne}  (helper to procedure \textsc{BisectAll}) 
does not merely return the next found element.
It instead
returns a pair of two sets.
The first set contains elements that can safely be removed from future search
steps.
The second is a singleton set --- the ``found element'' in essence.
As line 2 of
\textsc{BisectOne} indicates, this means that \Test($items$)
is greater than 0, \textit{i.e.}, the presence of this
singleton set, namely $items$, in a compilation causes
result-variability.
That means we have successfully located one variability-inducing
file/function.
We now return the pair $items, items$ indicating:
(1)~that we found $items$, and (2)~we can exclude
$items$ in future searches (line 7 of
\textsc{BisectAll}).
These elements are then removed from the search space in future Bisect
searches (as seen on line~\ref{alg:line:remove}
of procedure \textsc{BisectAll} in
Algorithm~\ref{alg:bisectall}).
This removal is not necessary for the algorithm to work correctly,
or even for the complexity,
but it is merely an optimization that allows us to prune the search
space if we happen to find elements which cause the given test to pass.
This optimization is one significant deviation from
Delta debugging~\cite{paper:delta-debugging} ---
a point discussed under the
heading {\bf Assumption 2} of Section~\ref{subsec:bisect-analysis}.

As a specific example of this strategy,
notice what we do on line 9 of
\textsc{BisectOne} which is when
$\Test(\Delta_1) = 0$.
Then we suppress future testing on $G\cup \Delta_1$.

The \Test function that is passed to the Bisect algorithms is a
user-defined metric that has the following attributes:
\begin{itemize}
\item Maps a set of items to a non-negative value, $[0, \infty)$.
\item $\Test(items) = 0 \Rightarrow$ there are no variability
  causing items
\item $\Test(items) > 0 \Rightarrow$ there is at least one
  variability causing item
\end{itemize}

In Figure~\ref{fig:bisect-example}, we can see an example of running
Algorithm~\ref{alg:bisectall}.
The \pass symbol indicates an instance when $\Test(items) = 0$ and the
\fail symbol indicates $\Test(items) > 0$. 
Horizontal lines separate individual invocations of \textsc{BisectOne}.
The small X's in Figure~\ref{fig:bisect-example} refer to the extra set of elements returned by procedure \textsc{BisectOne} indicating a set of elements to discard for future search.

Although it is true that for this example, it would be cheaper to do a linear
search over the elements, a linear search would always be $O(n)$, where $n$ is
the total number of elements.
This Bisect algorithm has worst-case complexity $O(k \log n)$ and best-case
complexity $O(k \log k)$ where $k$ is the number of variability-causing
elements to find.
Section~\ref{subsec:bisect-analysis} discusses these bounds in more detail.

\subsection{Implementation of Bisect}
\label{subsec:bisect-implementation}

The Bisect search algorithm utilizes a well-known divide and conquer technique
but applying it to find the functions causing variability is nontrivial.
Note, the terms ``function'' and ``symbol'' are used interchangeably,
although symbol usually refers to a compiled version of the function.
Since the problem is to find all functions causing variability, we could group
all functions of the application and apply the Bisect
algorithm.
But, for anything more substantial than small applications, the search space
becomes too large to search effectively.
Instead, akin to how Delta Debugging~\cite{paper:delta-debugging} was extended
to work on hierarchical structures~\cite{paper:hierarchical-delta-debugging},
we perform this Bisect algorithm on a dual-level hierarchy, first by searching
for the files where the compiler caused variability, and then
searching the functions within each found file.
This hierarchical approach
allows us to reduce the search space considerably, by splitting up the
full Bisect search into much smaller separate searches.

\begin{figure}
  \centering
  \includegraphics[width=.6\columnwidth]{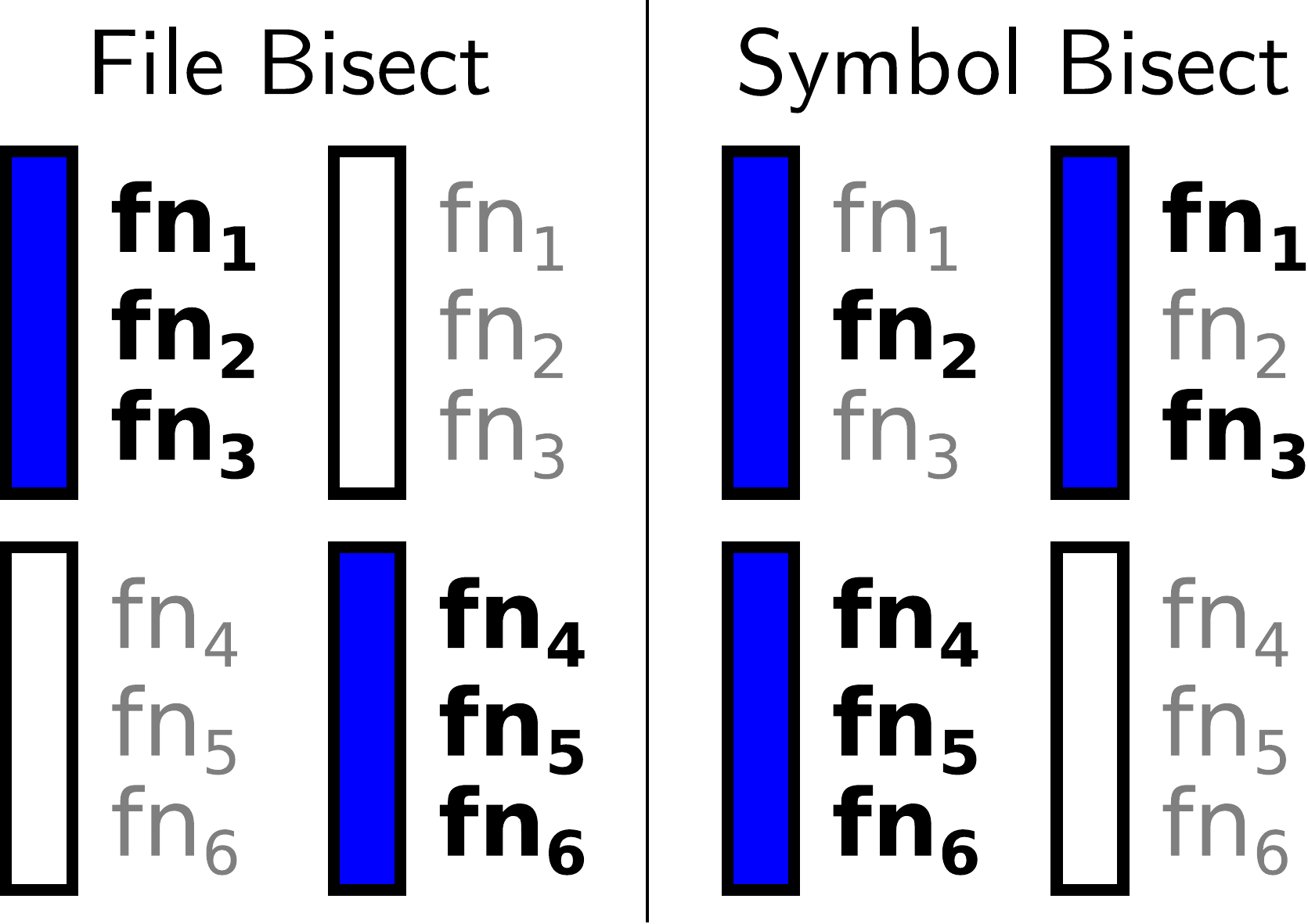}
  \Description[
    Illustration of mixing and matching object files or function symbols.
  ]{
    With the Bisect algorithm, there are two passes done.
    The first is File Bisect and then Symbol Bisect.
    This figure shows that File Bisect mixes object files, and Symbol Bisect
    mixes function symbols from within object files together.
    The Symbol Bisect is done by marking some function symbols as weak and
    linking in both compiled versions of the same object file.
    The symbols changed to weak symbols are thrown away for the strong ones
    from the other version of the same object file.
  }
  \caption{
    Highlights the difference between File Bisect and Symbol Bisect.
    File Bisect mixes compiled object files.
    Symbol Bisect marks some symbols as weak within object files and links in
    both copies of the object file.
    The functions in bold are strong symbols that are available in the
    final executable.
    Only Symbol Bisect requires the \texttt{-fPIC} flag so that we can match up
    functions arbitrarily.
    }
  \label{fig:file-bisect}
\end{figure}

The \Test function used for File Bisect
links together the object files
generated from the two different compilations, some from the
variability-inducing compilation, and the rest from the baseline compilation.
The \Test function passed into the Bisect algorithm is generated from
the baseline compilation, the variable compilation, and the full list of source
files.
When a set of source files are passed into the \Test function, those
files are compiled with the variable compilation with all others compiled with
the baseline compilation, and then the two sets of object files are linked
together.
We provide a visualization of File Bisect in the left half of
Figure~\ref{fig:file-bisect}.

It is possible that the baseline and variable compilations use different
compilers, in which case this approach depends heavily on binary compatibility
between the two compilers
\cite{intel-compatibility,gcc-compatibility}.
Since many compilers implement their own C++ standard library (since C++ 11),
one achieves binary compatibility only by forcing all compilers to
use a common implementation.
In our experiments, we chose to have all compilers use the GCC implementation
of the C++ standard library.

In the File Bisect phase,
the Bisect algorithm finds all variability-contributing 
object files when compiled with the variable compilation.
Each compiled object file comes from a single source file, and therefore can
indicate each responsible source file.

Having finished finding all variability-contributing object files,
we move on to finding the variability-inducing symbols within the found
object files
(\textit{i.e.}, methods and functions).
This second pass over symbols, called Symbol Bisect, is performed
individually on all symbols within each found variability-producing object
file.

\noindent{\bf Exploiting Linker Behavior and Objcopy:\/} The
method for selecting functions from two different versions of the same
object file is done by making use of strong and weak symbols and is shown in
the right half of Figure~\ref{fig:file-bisect}.
At link time, if there is more than one strong symbol, the linker reports a
duplicate symbol error.
If there is more than one weak symbol, then the linker is allowed to choose
which one to keep and discards the rest.
In the case there is one strong symbol and one or more weak symbols, the linker
keeps the strong symbol and discards all weak symbols.
It is the last case we utilize to select functions.
Using \texttt{objcopy}, we can duplicate an object file, and change a subset of
the strong symbols into weak symbols.
The other object file is then treated similarly, but marking the complement set
of symbols as weak.
At this point, both object files can be successfully linked together into the
executable.

However, when a compiler generates an object file,
it works under the assumption that the object file, also known as a single
translation unit, is indivisible~\cite{c-standard},
and therefore perform many optimizations based on that assumption.
%
%
%
This problem of switching the implementation of a function
has been solved in the domain of shared libraries, with the use
of \texttt{LD\_PRELOAD} and is called interposition.
%
To successfully replace all instances of one function, one must
use the \texttt{-fPIC} flag, thus disabling inlining of functions
that are callable from other translation units
(\textit{i.e.}, the globally exported symbols).
When the search reaches the Symbol Bisect phase,
the target file is recompiled with this
flag, and the result is checked.
If variability is removed by using
\texttt{-fPIC}, then the search cannot go deeper; we must
be content with reporting the file containing the variability.
We are limited, therefore, to search within the space of globally exported
symbols, since those are the only ones we can guarantee can be replaced
entirely with the desired version.

Our File Bisect and Symbol Bisect approaches are not the only ways to combine
functions from two different compilations.
For example, some compilers allow turning on and off compiler optimizations
using \texttt{\#pragma} statements.
This approach would work only for compilers with such a capability,
and would not be able to handle the situation of mixing compilations
that have two different compilers, such as GCC and the Intel compiler,
or even two different compiler versions.
Another strategy is to split the functions into separate source files.
However, this approach is non-trivial to implement and has the potential to
disable many of the optimizations that cause variability.
The final approach we considered was compiler intermediate representation,
such as LLVM IR.
This approach will work only with the compilers with which we can
perform such a pass, at the very least excluding the use of closed source
compilers such as the Intel compiler, the IBM compiler, and the PGI compiler.
For these reasons, we chose to work on combining object files after compilation
to conduct our search in File Bisect and Symbol Bisect.

We autogenerate the \Test function for Symbol Bisect using
the full set of source files,
the one source file to search,
and the full list of globally-exported symbol names from that source file.
It then marks certain symbols as weak from the two versions of the
variability-inducing object file
(compiled by \flit with \texttt{-fPIC})
and links together these two object files with the rest of the object files
compiled with the baseline compilation.

\subsection{Bisect Analysis}
\label{subsec:bisect-analysis}

Stated in a general manner,
our objective is to find all functions that contribute to the observed
variability.
The Bisect algorithm is used for both symbols and files, so here we
use a set of elements for which a \Test function can
quantify the observable variability.

Bisect is based on Delta Debugging, whose explicit goal is to
find a single \textbf{minimal set} that causes  \Test to
fail~\cite{paper:delta-debugging}.
\begin{definition} \label{def:minset}
  $Y$ is a \textbf{minimal set} of $X$,
  denoted by the boolean relation $MS(Y, X)$,
  if
  $
  \forall Z,\,
  [Y \subseteq X
  \wedge
  \Test(Y) > 0
  \wedge
  Z \subsetneq Y \Rightarrow \Test(Z) = 0]
  $.
\end{definition}
Such a minimal set is not guaranteed to be unique.
Furthermore, Delta Debugging only approximates minimal sets.
Instead of finding an arbitrary minimal set, we seek to find all elements that
contribute to the variability observed when we test all elements.
We start out by defining the elements we do not care about.
\begin{definition} \label{def:benign}
  $x$ is a \textbf{benign element} of $\,X$,
  denoted by the bool\-ean relation $B(x, X)$,
  if $\forall Y \subseteq X$, [$\Test(Y) = \Test(Y \cup \{ x \})]$.
\end{definition}
In other words, a benign element has no effect on the outcome of \Test within
the set $X$.
Using this definition of a benign element,
we define a variable elements as not benign.
\begin{definition} \label{def:all-variable}
  The set of \textbf{all variable elements} of $X$,
  denoted $AV(X)$,
  is $AV(X) \triangleq X \setminus \{ x : B(x, X) \}$
\end{definition}
This set $AV(X)$ represents the smallest set that fully explains $\Test(X)$.
Specifically, by the definition of benign elements, we see
$\Test(AV(X)) = \Test(X)$.
Finding this set $AV(X)$ is the goal of this paper and of the Bisect algorithm.
Without any assumptions or restrictions on the search space, just identifying a
single benign element $x$ requires testing against every subset of $X$ to
certify that $x$ is truly benign.
The complexity to evaluate $B(x, X)$ is $O(2^N)$ for just one element,
where $N = |X|$.
\begin{assumption} \label{assume:no-cancel}
  Errors from different sets of variable elements are distinct in magnitude.
  That is,
  $\Test(X) = \Test(Y)$ if and only if $AV(X) = AV(Y)$.
\end{assumption}
This assumption states that the only way for \Test values to match is if
the same underlying variable elements are present.
Given the nature of floating-point arithmetic, it is very unlikely for
compiler-induced variability to have the exact same magnitude.
Without this assumption, we could not do any better than brute-force
search or some approximation technique.

It is noteworthy to mention that given this assumption,
we can formulate this problem to be solved by Delta Debugging, as
follows.
Let $U$ be the universal set of all elements.
Define a new Boolean function $\Test'(Y) \triangleq [\Test(Y) = \Test(U)]$.
%
%
%
\begin{theorem}
  Let~
  $
   MS'(Y, X)
     \triangleq
  \forall Z,\,     
       [Y \subseteq X
       \wedge
       \Test'(Y)
       \wedge
       Z \subsetneq Y \Rightarrow \neg\Test'(Z)].
   $
  If Assumption~\ref{assume:no-cancel} holds, then $MS'(AV(U), U)$ and
  $\forall X,\, [X \neq AV(U) ~ \Rightarrow ~ \neg MS'(X, U)]$.
  \textup{That is}, $AV(U)$
  \textup{is the unique minimal set of} $U$.
\end{theorem}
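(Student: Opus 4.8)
The plan is to establish the two assertions of the theorem separately --- that $AV(U)$ satisfies the minimality predicate $MS'(\cdot, U)$, and that no other set does --- reading $MS'(Y, U)$ as its phrasing intends: namely that $Y \subseteq U$, that $\Test'(Y)$ holds, and that every proper subset $Z \subsetneq Y$ has $\neg\Test'(Z)$. Throughout I would rely on three ingredients already in hand: the identity $\Test(AV(W)) = \Test(W)$ for every set $W$ (from the definition of benign elements), the structural containment $AV(W) \subseteq W$, and Assumption~\ref{assume:no-cancel}, which converts each statement about \Test-values into one about $AV$-sets. The single observation driving everything is that $\Test'(Y)$ holds if and only if $AV(Y) = AV(U)$: unfolding $\Test'(Y)$ gives $\Test(Y) = \Test(U)$, and Assumption~\ref{assume:no-cancel} makes this equivalent to $AV(Y) = AV(U)$.

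For $MS'(AV(U), U)$ I would verify the three requirements in turn. Containment $AV(U) \subseteq U$ is immediate, and $\Test'(AV(U))$ holds because $\Test(AV(U)) = \Test(U)$. The remaining task is to show $\neg\Test'(Z)$ for each $Z \subsetneq AV(U)$. Such a $Z$ must omit some $x \in AV(U) \setminus Z$; since $AV(Z) \subseteq Z$, that element is missing from $AV(Z)$ as well, while $x \in AV(U)$, so $AV(Z) \neq AV(U)$. By the driving observation this yields $\neg\Test'(Z)$, completing the first claim.

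For uniqueness I would fix an arbitrary $X \neq AV(U)$ and split on whether $\Test'(X)$ holds. If it fails, then $X$ violates the second requirement of $MS'(X, U)$ outright. If $\Test'(X)$ holds, the driving observation gives $AV(X) = AV(U)$; since $AV(X) \subseteq X$ but $AV(X) = AV(U) \neq X$, the set $AV(U)$ is a \emph{proper} subset of $X$, and $\Test'(AV(U))$ holds because $\Test(AV(U)) = \Test(U)$. Thus $AV(U)$ witnesses a proper subset of $X$ on which $\Test'$ is true, violating minimality. In either case $\neg MS'(X, U)$, so $AV(U)$ is the unique minimal set.

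The main obstacle is not conceptual depth but making the ``missing element'' step precise: the argument hinges on noticing that forming $AV(Z)$ can only shrink $Z$, so any element absent from $Z$ remains absent from $AV(Z)$, which is exactly what forces $AV(Z) \neq AV(U)$ for every proper subset. Bridging this set-inclusion fact to the \Test-indexed equivalence of Assumption~\ref{assume:no-cancel} is the crux on which both halves turn. I would also take care with the $\subsetneq$ versus $\subseteq$ bookkeeping --- in particular confirming that the witness $AV(U)$ in the uniqueness case is a \emph{strict} subset of $X$ --- and I would flag that the predicate must be read with $Y \subseteq U$ and $\Test'(Y)$ as genuine conditions rather than vacuously dischargeable hypotheses, since otherwise sets failing $\Test'$ would be trivially minimal and the uniqueness claim would break.
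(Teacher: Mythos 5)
Your proof is correct and follows essentially the same route as the paper's: both establish $MS'(AV(U),U)$ via Assumption~\ref{assume:no-cancel} and then rule out any competing $X \neq AV(U)$ by the same case split, since your split on whether $\Test'(X)$ holds is, under that assumption, identical to the paper's split on whether $AV(X) = AV(U)$. If anything, your write-up is slightly tighter --- you justify the step $Z \subsetneq AV(U) \Rightarrow AV(Z) \neq AV(U)$ with the missing-element argument where the paper merely asserts it, and you correctly note that $\Test'(Y)$ must be read as a conjunct of $MS'$ rather than as part of the implication's antecedent, a reading the paper uses implicitly.
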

\begin{proof}
 By the definition of $AV$, we have $\Test'(AV(U))$ is true
 because $AV(AV(U)) = AV(U)$.
 From Assumption~\ref{assume:no-cancel},
 if $Z \subsetneq AV(U)$, then $\neg \Test'(Z)$,
 since $AV(Z) \neq AV(U)$.
 Therefore $MS'(AV(U), U)$ is true.
 Now, assume $AV(U)$ is non-unique.
 Then there exists an $X \subseteq U$ such that $X \neq AV(U)$
 and $MS'(X, U)$ is true.
 This leads to a contradiction:
  \begin{description}
    \item[\textbf{Case 1:}] $AV(X) = AV(U) \subsetneq X$. \\
      But $AV(X) \subsetneq X$ and $\Test'(AV(X))$,
      therefore $\neg MS'(X, U)$.
      \Lightning

    \item[\textbf{Case 2:}] $AV(X) \subsetneq AV(U)$. \\
      But $\Test(X) \neq \Test(U)$ because $AV(X) \neq AV(U)$
      by Assumption~\ref{assume:no-cancel}.
      Therefore $\neg \Test'(X)$ and subsequently, $\neg MS'(X, U)$.
      \Lightning
  \end{description}
\end{proof}
Since Delta Debugging finds minimal sets and this minimal set is unique,
we could use Delta Debugging at this point to solve for $AV(U)$.
The complexity of the Delta Debugging algorithm is $O(k^2 \log N)$,
where $k = |AV(U)|$ and $N = |U|$.
We can do better.

\begin{assumption} \label{assume:singleton}
  \textbf{Singleton Blame Site Assumption}.
  Each variability element contributes individually.
  \begin{equation*}
    \forall x \in AV(X), \Test(\{x\}) > 0
  \end{equation*}
\end{assumption}
This assumption claims there is no situation where two or more elements
need to be tested together in order to generate a measurable variability.
In general, this is not always true.
However, we found in the domain of compiler-induced variability,
it is true in practice
--
as demonstrated by the experimental use cases in this paper.
With Assumption~\ref{assume:singleton}, we can now do Bisect search to find
each element of $AV(U)$ individually.
Each call to \textsc{BisectOne} is a logarithmic search with complexity
$O(\log N)$.
This function is called once for each element to find from $AV(U)$.
Therefore, the complexity of the Bisect algorithm is $O(k \log N)$,
again with $k = |AV(U)|$.
If $k$ is proportional to $N$ (which for this problem we have not seen to be
the case), then a linear search may outperform both Bisect search and Delta
Debugging.

\begin{table*}
 \centering
 \caption{
   Compilers used in the \mfem study with summary statistics.
   The best flags are chosen by the best average speedup across all \mfem
   examples.
   The average speedup over all 19 \mfem examples is reported and
   is calculated relative to the speed of \texttt{g++ -O2}.
  }
 \begin{tabular}{lccccc}
  \toprule
  Compiler
    & Released
      & \# Variable Runs
          & Best Flags
            & Speedup \\
  \midrule
  gcc-8.2.0
    & 26 July 2018
      & 78 of 1,288 (6.0\%)
          & \texttt{-O2 -funsafe-math-optimizations}
            & 1.097 \\
  clang-6.0.1
    & 05 July 2018
      & 24 of 1,368 (1.8\%)
          & \texttt{-O3 -funsafe-math-optimizations}
            & 1.042 \\
  icpc-18.0.3
    & 16 May 2018
      & 984 of 1,976 (49.8\%)
          & \texttt{-O2 -fp-model fast=2}
            & 1.056 \\
  \bottomrule
 \end{tabular}
 \Description[
   Table of compilers used in the \mfem study with summary statistics
 ]{
   Table of compilers used in the \mfem study with summary statistics
   The GCC compiler is version 8.2.0 released on 26 July 2018.
   Of the 1,288 \mfem compilations with GCC, 78 of them produced different
   output, which is 6.0\% of the runs.
   The flags with the largest average speedup with GCC for \mfem was
   \texttt{g++ -O2 -funsafe-math-optimizations} with an average speedup factor of
   1.097 compared with \texttt{g++ -O2}.
   The Clang compiler is version 6.0.1 released on 05 July 2018,
   Of the 1,368 \mfem compilations with Clang, 24 of them produced different
   output, which is 1.8\% of the runs.
   The flags with the largest average speedup with Clang for \mfem was
   \texttt{clang++ -O3 -funsafe-math-optimizations} with an average speedup
   factor of 1.042 compared with \texttt{g++ -O2}.
   The Intel compiler is version 18.0.3 released on 16 May 2018.
   Of the 1,976 compilations with the Intel compiler, 984 of them produced
   different output, which is 49.8\% of the runs.
   The flags with the largest average speedup with the Intel compiler for \mfem
   was \texttt{icpc -O2 -fp-model fast=2} with an average speedup factor of 1.056
   compared with \texttt{g++ -O2}.
 }
 \label{tbl:compilers}
\end{table*}

\miketodo{tbl:compilers: make sure it matches Ian's data}

\begin{figure*}[tb]
  \centering
  \begin{subfigure}[b]{\columnwidth}
    \includegraphics[width=\columnwidth]{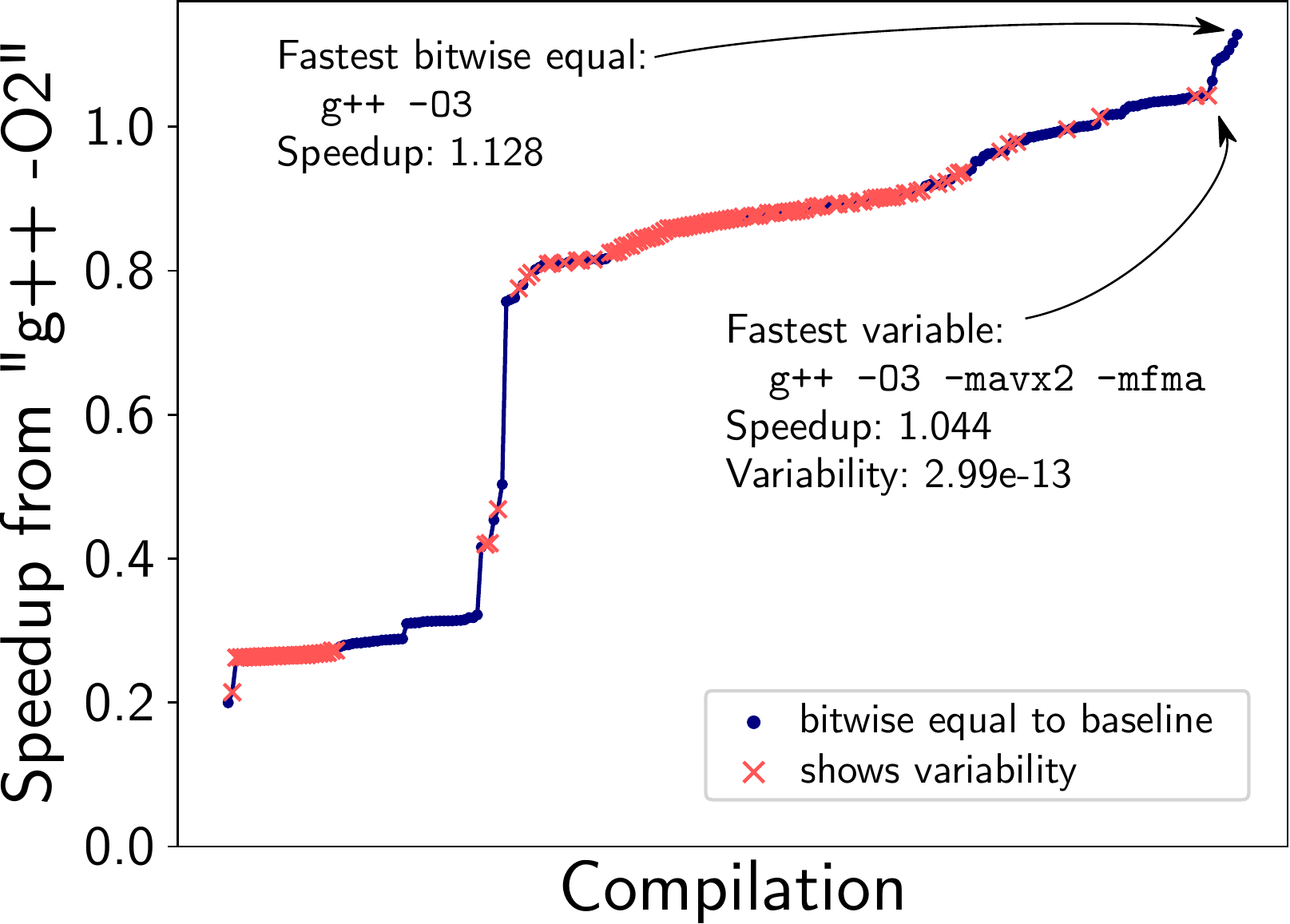}
    \caption{Example 5}
    \label{subfig:perf-repro-5}
  \end{subfigure}
  \quad
  \begin{subfigure}[b]{\columnwidth}
    \includegraphics[width=\columnwidth]{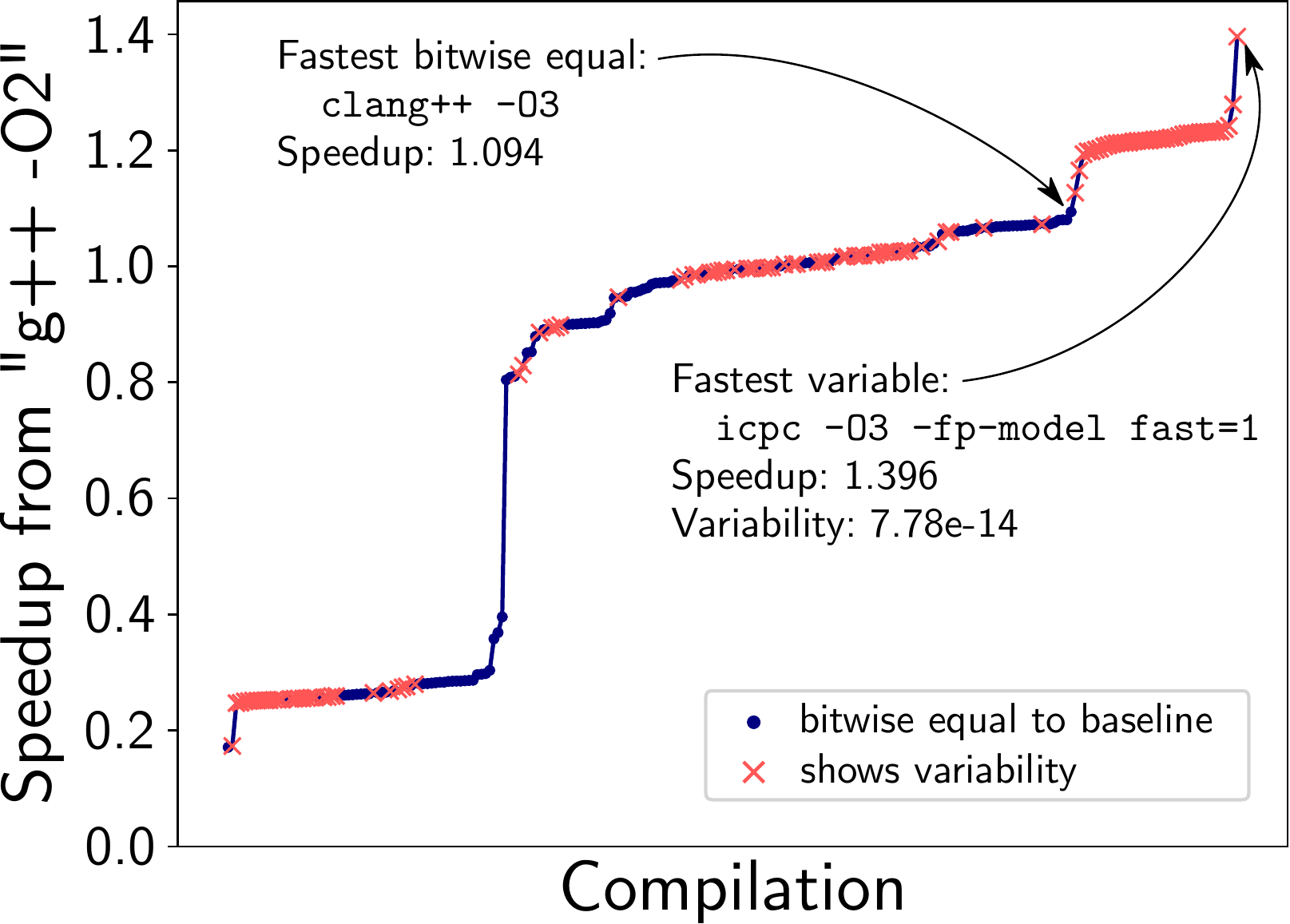}
    \caption{Example 9}
    \label{subfig:perf-repro-9}
  \end{subfigure}
  \vspace{-1em}
  \Description[
    Line plot of speedup per compilation for \mfem example 5 indicating which
    compilations produced bitwise equivalent outputs.
  ]{
    Line plot of speedup per compilation for \mfem example 5 indicating which
    compilations produced bitwise equivalent outputs.
    The compilations are sorted so that the line plot is non-decreasing.
    Compilations that produce bitwise equivalent output are represented with a
    blue circle whereas variability-producing compilations are represented with
    a red ex.
    The fastest bitwise equal compilation is \texttt{g++ -O3} with a speedup
    factor of 1.128 compared to \texttt{g++ -O2}.
    The fastest variable compilation is \texttt{g++ -O3 -mavx2 -mfma} with a
    speedup factor of of 1.044 compared to \texttt{g++ -O2} and a variability of
    $2.99 \times 10^{-13}$.
  }
  \caption{
    \mfem examples, speedup vs. compilation with compilations sorted by
    speedup.
    Both bitwise equal and variable compilations are shown.
    In~(\subref{subfig:perf-repro-5}), the fastest bitwise equal compilation was
    the fastest overall.
    In~(\subref{subfig:perf-repro-9}), the opposite is true.
    }
  \label{fig:perf-repro}
\end{figure*}

What if Assumption~\ref{assume:singleton} is not true?
We would generate false negatives.
Except, false negatives are formally checked using the assertions found in the
Bisect algorithm.
The assertion on line~\ref{alg:line:assert-bisectone} of \textsc{BisectOne}
verifies against the case when more than one element is required to cause
\Test to be positive.
It ensures that the list of found elements are each individual contributors
to variability.
The assertion on line~\ref{alg:line:assert} of \textsc{BisectAll} guarantees
that $\found = AV(items)$.
\begin{proof}
  By Assumption~\ref{assume:no-cancel}, since $\Test(\found) = \Test(items)$,
  we have $AV(\found) = AV(items)$.
  Furthermore, because of the assertion on line~\ref{alg:line:assert-bisectone}
  of \textsc{BisectOne}, we know that each element of $\found$ is a variable
  element.
  Therefore, $\found = AV(items)$.
\end{proof}

Despite this simple proof, the result is profound.
If Assumption~\ref{assume:no-cancel} holds, and the assertions in the
Bisect algorithm pass, then there are no false negatives,
meaning we have found all variability elements.
And this dynamic verification requires $2 + k$ extra calls to $\Test$
(though really $1 + k$ calls because $\Test(items)$ can be memoized).
However, if the assertion fails, then either
Assumption~\ref{assume:no-cancel}
or
Assumption~\ref{assume:singleton} are false,
in which case the user is notified that there may be false negative results.
Also worth noting is that because of the assertion on
line~\ref{alg:line:assert-bisectone},
we guarantee that $\found$ are all variable elements,
meaning it is impossible to get false positive results.

\subsection{The Bisect Biggest Algorithm}
\label{subsec:bisect-biggest}

\iftoggle{printfullbisectbiggest}{
  Before running the Bisect algorithm, we cannot know a priori the value of $k$
  -- the number of functions that exhibit variability.
  The worry is that if $n$ is large, then perhaps $k$ is large too.
  For this reason, we also developed a \textsc{BisectBiggest} algorithm where
  the value of $k$ can be fixed to a predetermined value, and we can exit the
  algorithm early.

  \def\allFiles{all_{\!f\!iles}}
\def\allSymbols{all_{sym}}
\def\foundFiles{f\!ound_{\!f}}
\def\foundSymbols{f\!ound_{s}}
\def\frontierFile{f\!rontier_{\!f\!ile}}
\def\frontierSymbol{f\!rontier_{\!sym}}
\def\kthScore{kthScore}
\def\scoreFile{score_{\!f}}
\def\scoreSymbol{score_{s}}
\def\currentFiles{f\!iles}
\def\currentSymbols{syms}

\def\PriorityQueue{\textsc{PriorityQueue}}
\def\Push{\textsc{Push}}
\def\Pop{\textsc{PopMax}}
\def\Size{\textsc{Size}}
\def\AllSymbols{\textsc{AllSymbols}}
\def\IsNotEmpty{\textsc{IsNotEmpty}}
\def\SortByScore{\textsc{SortByScore}}
\def\SplitInHalf{\textsc{SplitInHalf}}

\begin{algorithm}
  \caption{Bisect the biggest $k$ contributing functions}
  \label{alg:bisectbiggestk}
  \begin{algorithmic}[1]
    \Procedure{BisectBiggestK}{\Test, $\allFiles$, $k$}
      \State{$\foundFiles \gets \{~\}$}
      \State{$\foundSymbols \gets \{~\}$}
      \State{$\frontierFile \gets \PriorityQueue()$}
      \State{$\Push
        \left(
          \frontierFile,
          \left(
            \Test(\allFiles),
            \allFiles
          \right)
        \right)$}
      \State{$\kthScore \gets 0$}
      \While{$\IsNotEmpty(\frontierFile)$}
        \State{$\scoreFile, \currentFiles \gets \Pop(\frontierFile)$}
        \If{$\scoreFile \leq \kthScore$}
          {\textbf{break}}
        \EndIf
        \If{$\Size(\currentFiles) = 1$}
          \State{$\foundFiles \gets \foundFiles \cup \currentFiles$}
          \State{$\frontierSymbol \gets \PriorityQueue()$}
          \State{$\allSymbols \gets \AllSymbols(\currentFiles)$}
          \State{$\Push
            \left(
              \frontierSymbol,
              \left(
                \Test(\allSymbols),
                \allSymbols
              \right)
            \right)$}
          \While{$\IsNotEmpty(\frontierSymbol)$} \label{alg:line:immediate-recursion}
            \State{$\scoreSymbol, \currentSymbols \gets \Pop(\frontierSymbol)$}
            \If{$\scoreSymbol \leq \kthScore$}
              {\textbf{break}}
            \EndIf
            \If{$\Size(\currentSymbols) = 1$}
              \State{$\foundSymbols \gets \foundSymbols \cup \currentSymbols$}
              \State{$\SortByScore(\foundSymbols)$}
              \If{$\Size(\foundSymbols) \geq k$}
                \State{$\kthScore \gets \Test(\foundSymbols[k])$}
              \EndIf
            \Else
              \State{$\Delta_1, \Delta_2 \gets \SplitInHalf(\currentSymbols)$}
              \State{$\Push(\frontierSymbol, (\Test(\Delta_1), \Delta_1))$}
              \State{$\Push(\frontierSymbol, (\Test(\Delta_2), \Delta_2))$}
            \EndIf
          \EndWhile
        \Else
          \State{$\Delta_1, \Delta_2 \gets \SplitInHalf(\currentFiles)$}
          \State{$\Push(\frontierFile, (\Test(\Delta_1), \Delta_1))$}
          \State{$\Push(\frontierFile, (\Test(\Delta_2), \Delta_2))$}
        \EndIf
      \EndWhile
      \State{\Return $(\foundFiles, \foundSymbols)$}
    \EndProcedure
  \end{algorithmic}
\end{algorithm}

  The \textsc{BisectBiggest} algorithm can be seen in Algorithm~\ref{alg:bisectbiggestk}.
  At its core is Uniform Cost Search (UCS) \cite{uniform-cost-search}.
  But, instead of finding all files, then searching each file afterwards,
  the symbols are searched immediately upon finding the largest contributing file
  on line~\ref{alg:line:immediate-recursion}.
  The key point here is that if the \Test value for the current set of files or
  symbols is less than or equal to the \Test value of the $k^{\text{th}}$ most
  variability inducing function found so far, then we can safely exit.
  The ability to exit early stems from the following assumption:

  \begin{assumption} \label{assume:strict-no-cancel}
    If function $f$ is in the source file $F$,
    then $\Test(\{f\}) \leq \Test(\{F\})$.
  \end{assumption}
  That is to say that there is no function within a file with a higher
  variability score than the file itself.
  This is not always true since errors can partially cancel each other out.
  But without this assumption, we would need to find all variability-inducing
  files regardless of the value chosen for $k$.
  Unfortunately, we are not able to perform the same assertion in
  Algorithm~\ref{alg:bisectbiggestk} that is done in
  Algorithm~\ref{alg:bisectall} because the list of found elements is incomplete
  by design.

  Despite not having the dynamic verification assertion, this approach has the
  possibility to improve runtime significantly if it is only desired to grab the
  first one or two most variability inducing functions.
  It should be noted, however, that if the chosen value for $k$ is close to the
  true value for $k$, then using Algorithm~\ref{alg:bisectbiggestk} is more
  expensive than Algorithm~\ref{alg:bisectall}.
}{
  Along with the Bisect algorithm that finds all variability-inducing files and
  functions, we developed an algorithm that can search for the biggest $k$
  contributors where the user can choose the value for $k$.
  This variant is based on Uniform Cost Search and can exit early.
  Upon finding the largest contributing file, it immediately recurses to find
  the $k$ largest contributing symbols.
  When a file or symbol is found to have a smaller \Test value than the
  $k^{\text{th}}$ found symbol's \Test value, it exits early.
  It is not able to dynamically verify assumptions, but can significantly improve
  performance if only the top few most contributing functions are desired, and
  there happen to be many more than that to find.
}

\begin{figure*}[tb]
  \centering
  \includegraphics[width=\textwidth]{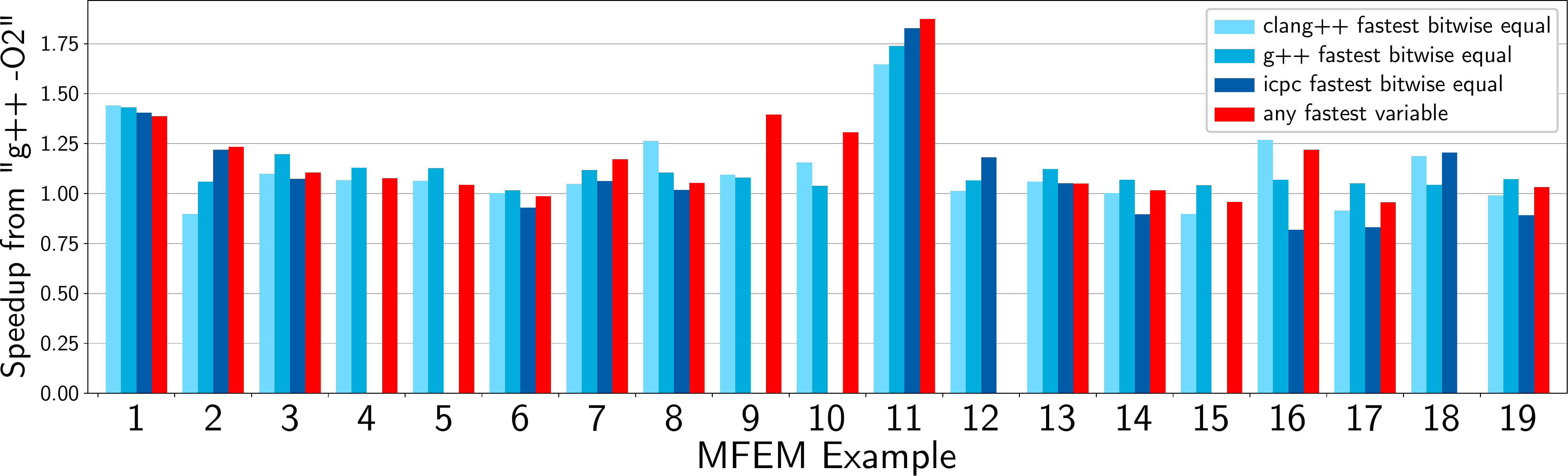}
  \Description[
    \mfem test performance histogram of the fastest compiled executables.
  ]{
    \mfem test performance histogram of the fastest compiled executables.
    There are four categories, of which the largest speedup from each category
    is displayed.
    The first three categories are the bitwise reproducible Clang, GCC, and the
    Intel compilations, and the final category is executables exhibiting
    result-variability.
    Tests 4, 5, 9, 10, and 15 have no results for the reproducible Intel
    compiler column since all Intel compilations exhibited result-variability.
    Tests 12 and 18 have no variability results, as all compilations resulted
    in bitwise equivalent outputs.
    Most of the time, bitwise reproducible compilations outperform
    variability-producing compilations.
    Tests 9 and 10 were the only two tests where variability-producing
    compilations gave a significant performance boost.
  }
  \caption{
    Performance histogram of the fastest compiled executable from each
    category for each \mfem test.
    The left three blue bars for each example represent the most
    performant bitwise equal execution, with the right red bar being the
    most performant execution exhibiting variability (combined from the three
    compilers).
    Missing bars mean there were no results in that category.
    Examples 12 and 18 had no compilations that produced variability.
    Examples 4, 5, 9, 10, and 15 are missing the Intel compiler bar, because
    variability was introduced by the Intel link step, regardless of
    optimization level or switches.
    }
  \label{fig:perf-histogram}
\end{figure*}

\section{Experimental Results}               
\label{sec:experimental-results}
We performed three evaluations of \flit: \mfem, Laghos, and LU\-LESH.
We applied \flit to \mfem to view the speed and variability space;
then we applied \flit Bisect on all found variant compilations.
The second evaluation is a real-world case study running \flit Bisect on the
Laghos codebase with an unknown issue with variability.
\miketodo{
  unknown issue?
  It was unknown when we started working on it, but was found manually.  I
  think it would be best to say it was an unknown issue that was confirmed
  using a manual approach of debugging the issue.
}
Finally, we used an LLVM pass to modify floating-point operations in the
compilation of the LULESH miniapp to evaluate precision and recall of
the Bisect algorithm.

%
%

\begin{figure}[tb]
  \centering
  \includegraphics[width=\columnwidth]{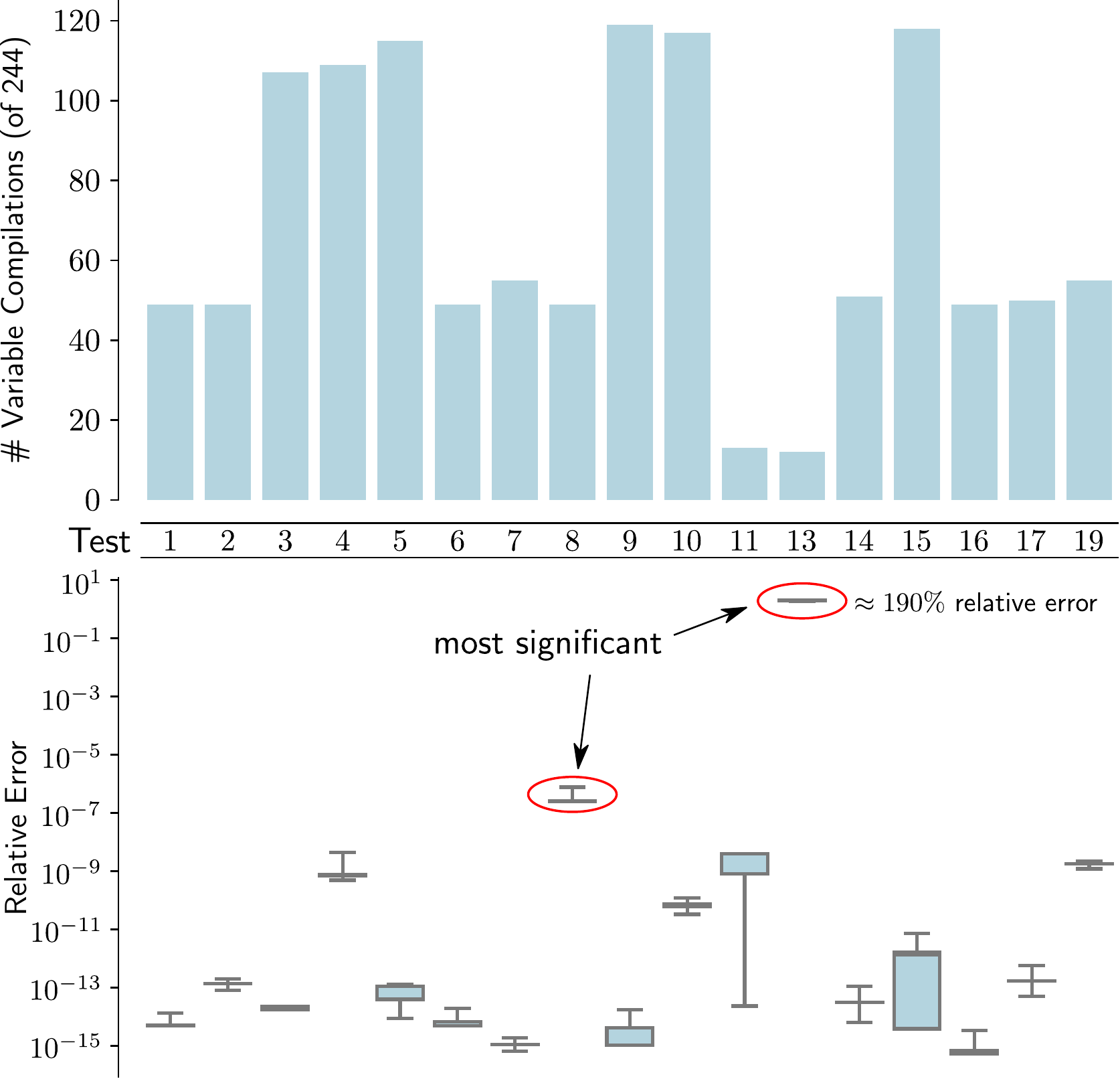}
  \Description[MFEM boxplot of found variability]{
    MFEM boxplot of found variability.
    A barchart on top shows how many variability-inducing compilations out of
    244 were found by \flit for each test.
    A boxplot on bottom shows, on a vertical logscale, the range of relative
    $\ell_2$ errors induced by the different compilations.
    Tests 12 and 18 are omitted because they had no found variabilities.
    Of note, tests 8 and 13 had significant variability, with test 8 having a
    relative error around $10^{-6}$ and test 13 having approximately 190\%
    relative error.
    }
  \caption{
    MFEM found variability.
    For each test, the top bar chart shows the number of variability-inducing
    compilations out of 244 found by \flit.
    The bottom boxplot has a vertical logscale and shows the range of relative
    $\ell_2$ errors induced by the different compilations.
    Tests 12 and 18 are omitted because they had no found variabilities.
    }
  \label{fig:mfem-variability}
\end{figure}

\subsection{Performance vs. Reproducibility Case Study}
\label{subsec:performance-vs-repro}

\mfem is a finite element library poised for use in high-performance
applications.
\flit was used with three mainstream compilers to view the
tradeoff between reproducibility and speed, as seen in
Figure~\ref{fig:perf-repro}.
In Figure~\ref{fig:perf-histogram} we examine the fastest non-variant
compilations given by each compiler with the fastest variant overall.

The \mfem library comes with 19 end to end examples of how to use the
framework, which is what we used as test cases in \flit.
These examples include the use of MPI, which \flit now supports.
Each example produces calculated values over a full mesh or volume.
The comparison function used the $\ell_2$ norm of the mesh difference,
$||baseline - actual||_2$.

Using \flit, we compiled \mfem using the {\tt g++},
{\tt clang++}, and {\tt icpc} compilers as listed in Table~\ref{tbl:compilers}.
For these compilers, we paired a base optimization level,
{\tt -O0} through {\tt -O3},
with a single flag combination, taken from the list used
in~\cite{paper:flit-iiswc}.
This cartesian product leads to 244 compilations,
and with 19 test cases results in
a total of 4,636 experimental results.
Looking at a single MFEM example and ordering the compilations from slowest to
fastest, we get graphs similar to those found in Figure~\ref{fig:perf-repro}.
The points marked with a blue circle compare equal to the
baseline results from {\tt g++ -O0}, and those with a red X exhibit variability.
For \mfem example 5 (Figure~\ref{subfig:perf-repro-5}),
the fastest compilation with bitwise equal results
was 12.8\% faster than \texttt{g++ -O2}.
This example was not an outlier;
we find similar results in 14 of the 19 examples
(Figure~\ref{fig:perf-histogram}).
This finding contrasts with Figure~\ref{subfig:perf-repro-9}, which has the
variant compilations grouped near the top and showing a significant speedup
over the fastest functionally equivalent compilation.

While these plots give detail to individual experiments,
Figure~\ref{fig:perf-histogram} shows a bigger picture.
Each grouping shows the fastest non-variant compilation and the fastest variant
compilation in regards to a single experiment.
Once again, 14 out of 19 experiments show non-variant compilations to 
be also the fastest.
Variant compilations are noticeably faster than non-variants in
only 2 of the groupings.

The magnitude of the observed result-variability is also important to consider.
In Figure~\ref{fig:mfem-variability}, we see the min, median, and max of the
relativized errors observed by the different compilations of each \mfem
example.
The errors were normalized by dividing by the $\ell_2$ norm of the baseline
mesh values.
Examples 8 and 13 showed significant variability, and are examined further
using Bisect.

\subsection{Bisect}
\label{subsec:bisect-results}

\flit found 1,086 compilations which lead to variant results,
each of which were explored by \flit Bisect.
These searches were over a non-trivial codebase.
An overview of the success rate of Bisect is available in
Table~\ref{tbl:mfem-compiler-bisect}.


\begin{table}
  \caption{
    Compiler characterization of Bisect with \mfem.
    Only those runs that succeeded with File Bisect went on to perform Symbol
    Bisect.
    A failure here means the resulting mixed executable crashed.
    }
  \centering
  \small
  \begin{tabular}{lcccc}
    \toprule
      & g++   & \!clang++\! & icpc    & total    \\
    \midrule
    average test executions\!
      & 64    &   29    & 27      & 30       \\
    File Bisect successes\!
      & 78/78 & 24/24   & 778/984 & 880/1,086 \\
    Symbol Bisect successes\!
      & 51/78 & 24/24   & 585/778 & 660/880  \\
    \bottomrule
  \end{tabular}
  \Description[\mfem Bisect measurements by compiler]{
    Table of \mfem Bisect measurements by compiler.
    The average number of test executions for GCC, Clang, and the Intel
    compiler were 64, 29, and 27 respectively with a total average of 30.
    The number of File Bisect successes for GCC, Clang, and the Intel compiler
    were 78 of 78, 24 of 24, and 778 of 984 respectively with a total of 880 of
    1,086.
    Only those that passed File Bisect went on to Symbol Bisect
  }
  \label{tbl:mfem-compiler-bisect}
\end{table}

The \mfem library contains almost 3,000 functions which are exported symbols,
as seen in Table~\ref{tbl:mfem-bisect-results}.
The \flit Bisect approach depends only on the number of source files and
functions, as opposed to static and dynamic analysis approaches that rely on the depth and breadth of the call tree.
While this size of 3,000 functions is daunting for a linear search, the
Bisect approach used an average of 30 executions including the verification
assertion.
\flit was able to isolate the variability to the file level 80\% of the time,
and of those was able to isolate the variability to the symbol level 75\% of
the time.

Two findings were significant enough to be reported back to the \mfem team and
are currently under further investigation.

\paragraph{Finding 1:}
\mfem example 8 is an iterative algorithm with a stopping criteria of
$10^{-12}$, yet converges to a value that has an absolute error of $10^{-6}$,
meaning it converged differently because of compiler optimizations.
\flit Bisect found all nine functions causing the variability for example 8, each
performing matrix and vector operations.
The compilations were
\texttt{icpc -O2},
\texttt{icpc -O3},
\texttt{g++ -O2 -mavx2 -mfma},
\texttt{g++ -O3 -mavx2 -mfma}, and
\texttt{g++ -O3 -funsafe-math-optimizations}.
FMA is a likely culprit as well as vectorization.

\paragraph{Finding 2:}
Example 13 had the most substantial variability by far, having between 183\% to
197\% relative error.
\flit Bisect found only one function to contribute to variability, a function
that calculates $M = M + a A A^\top$ with $a$ being a scalar, and $M$ and $A$
being dense square matrices.
This function is implemented in a straightforward manner using nested for
loops.
The compilations responsible enable AVX2, FMA, and higher precision
intermediate floating-point values.
Therefore, we suspect FMA, vectorization, and higher precision intermediates to
be the reasons for the variability.

\subsection{Characterization of Compilers}
\label{subsec:compiler-characteristics}

From this two-part experiment, we can assess the compilers predilection for
speed, variability, and compatibility.

The maximum available speedup for a single example ranges from a factor of 1.02 
to 1.87 relative to the {\tt g++ -O2} compilation.
But each example has its own best compilation.
Since \mfem is a library, it is better to see which compilation lead to the best
average speedup across all examples to cover all use cases.
The best average compilations, separated by compiler,
can be seen in Table~\ref{tbl:compilers}, in which {\tt g++} comes in first
with a speedup factor of 1.097.
Note, all three of these fastest average compilations have variability induced
on at least one example.

In that same Table is the percentage of compilations which caused variability.
The most invariant compiler is {\tt clang++} with only 1.8\% of compilations
deviating from the baseline.
The most variant compiler, producing almost half variable compilations (at 49.8\%),
is the Intel compiler, {\tt icpc}.
Intel's compiler went from a distant second in speed to last in variability.

By examining the Bisect results more closely, we discovered some issues that
drove the 20\% failure rate of File Bisect.
When {\tt icpc} and {\tt g++} object files were linked together, the resulting
executable would sometimes fail with a segmentation fault.
While Intel claims compatibility with the
GNU compiler~\cite{intel-compatibility},
this does not seem to always hold.

\begin{table}
  \caption{
    General statistics of code used by the \mfem examples.
    }
  \centering
  \begin{tabular}{lr}
    \toprule
    source files               &      97   \\
    average functions per file &      31   \\
    total functions            &   2,998   \\
    source lines of code       & 103,205   \\
    \bottomrule
  \end{tabular}
  \Description[General statistics of code used by the \mfem examples]{
    General statistics of code used by the \mfem examples.
    \mfem has 97 source files, 31 functions per file on average, 2,998 total
    functions, and 103,205 lines of code.
  }
  \label{tbl:mfem-bisect-results}
\end{table}

\subsection{Penetration into Laghos}
\label{subsec:laghos}

The issue found by the developers of Laghos manifested when they compiled with
IBM's {\tt xlc++} compiler at {\tt -O3}.
Given the code, Bisect was able to find an issue
not related to floating-point
that was already fixed in another branch.
After fixing that problem, we were able to isolate the problem down to the
function level.
%

The tool developers trusted the results from both {\tt g++ -O2} and
{\tt xlc++ -O2} when using their branch of the code.
We used a public branch of the code in an attempt to reproduce the results they
had.
In our runs, all results were the special floating point value $NaN$.
Using Bisect, we narrowed this down to the two visible symbols closest to the
issue.
The source code in question was \verb|#define xsw(a,b) a^=b^=a^=b|, which
evokes undefined behavior in C++.
Bisect identified these two function in 45 program executions.
The developers confirmed the bug, which they had fixed in their
version.
While this may appear to be a case of finding a bug yet again, the fact that our
automated Bisection-based search found this issue must be viewed as a step forward,
considering that the manual process by which the developers initially found this issue is
``hit or miss'' and requires expert's time to be spent.

After fixing this issue, we achieved results agreeing with the developer-stated results
for both the trusted compilation and the variant {\tt xlc++ -O3} compilation.
We ran many variants of Bisect to evaluate the speed and effectiveness of
\textsc{BisectAll} and \textsc{BisectBiggest},
as can be seen in Table~\ref{tbl:laghos-2nd}.
By limiting either the digit sensitivity of our \texttt{compare} function
or the $k$ value of \textsc{BisectBiggest}
($k = \text{all}$ refers to using the traditional Bisect algorithm),
the number of runs varied from 69 to 14,
all of which were able to identify the most significant variability-inducing
function.
In the function pointed to was an exact comparison to 0.0 in an if statement.
The value compared against 0.0 had small variability,
but the difference in branching resulted in significant application variability.
Changing this to an epsilon based comparison gave results close to the
trusted results, even under {\tt xlc++ -O3}.

\begin{table}
  \caption{
    Bisect statistics of the Laghos experiment.
    The baseline compilation is provided, with the compilation under test being
    \texttt{xlc++ -O3} versus the result of \flit Bisect.
    The \textit{strict} qualifier refers to the additional flag
    \texttt{-qstrict=vectorprecision}.
    We restrict the comparison to compare only the number of digits in the
    digits column.
    The $k$ value is
    how many of the most contributing functions Bisect is asked to find.
    }
  \centering
  \small
  \begin{tabular}{p{1.5cm}c|ccc|ccc|ccc}
    \toprule
    baseline & digits
      & \multicolumn{3}{c|}{\# files}
               & \multicolumn{3}{c|}{\# funcs}
                    & \multicolumn{3}{c}{\# runs}
    \\
      & \multicolumn{1}{r|}{$k$:}
        & 1 & 2 & \!all\!
        & 1 & 2 & \!all\!
        & 1 & 2 & \!all\!
    \\
    \midrule
    \multirow{4}{*}{\texttt{g++ -O2}}
      &  2  & 1 & 1 & 1 & 1 & 1 & 1 & 18 & 18 & 14 \\
      &  3  & 1 & 1 & 1 & 1 & 1 & 1 & 18 & 18 & 14 \\
      &  5  & 1 & 1 & 1 & 1 & 1 & 1 & 18 & 18 & 14 \\
      & all & 2 & 3 & 5 & 1 & 2 & 7 & 28 & 37 & 57 \\
    \midrule
    \multirow{4}{*}{\texttt{xlc++ -O2}}
      &  2  & 1 & 1 & 1 & 1 & 1 & 1 & 18 & 18 & 14 \\
      &  3  & 1 & 1 & 1 & 1 & 1 & 1 & 18 & 18 & 14 \\
      &  5  & 1 & 1 & 1 & 1 & 1 & 1 & 18 & 18 & 14 \\
      & all & 2 & 3 & 6 & 1 & 3 & 7 & 28 & 37 & 69 \\
    \midrule
    \multirow{4}{*}{%
      \parbox{2cm}{%
        \texttt{%
          xlc++ -O3 \hfill \\
          }
          strict
        }
      }
      &  2  & 1 & 1 & 1 & 1 & 1 & 1 & 18 & 18 & 14 \\
      &  3  & 1 & 1 & 1 & 1 & 1 & 1 & 18 & 18 & 14 \\
      &  5  & 1 & 1 & 1 & 1 & 1 & 1 & 18 & 18 & 14 \\
      & all & 2 & 3 & 5 & 1 & 2 & 5 & 28 & 39 & 60 \\
    \bottomrule
  \end{tabular}
  \Description[Table of Bisect statistics of the Laghos experiment]{
    Table of Bisect statistics of the Laghos experiment.
    The number of digits compared in the comparison function was 2, 3, 5, and
    all digits.
    All digits is basically checking for identical results.
    The value of $k$ for the \textsc{BisectBiggest} $k$ algorithm was chosen as
    1, 2, and all, with all meaning falling back to the \textsc{BisectAll} algorithm.
    Three different baselines were tested, \texttt{g++ -O2}, \texttt{xlc++
    -O2}, and \texttt{xlc++ -O3 -qstrict=vectorprecision}.
    For all three baselines, and digits less than all, only one file and one
    function was found, with a total of 18 runs for $k$ being 1 or 2 and 14
    runs with using \textsc{BisectAll}.
    When all digits were considered, up to 7 functions were found with many
    more executions, going for 28, 37, and 57 respectively for $k$ being 1, 2,
    and all for the first baseline compilation, \texttt{g++ -O2}.
    The other two baselines had even more executions necessary to find the
    variability-inducing files and functions.
  }
  \label{tbl:laghos-2nd}
\end{table}

\subsection{Injection Study}
\label{subsec:injection}

We performed controlled injections of floating-point variability at all
floating-point code locations to quantify the accuracy of our tool.
%

Our injection framework is based on the LLVM compiler~\cite{paper:llvm-compiler}
and introduces an additional
floating-point operation in a given floating-point instruction of the LLVM
intermediate representation (IR).
More formally, given a target floating-point instruction of the form
$x \textit{ OP } y$, where $x$ and $y$ are floating-point operands, and
\textit{OP} is a basic floating-point operation (+,-,*,/), we introduce an
additional operation $x \textit{ OP' } \epsilon$, where \textit{OP'} is also a
basic floating-point operation and $\epsilon$ is chosen from a uniform
distribution between 0 and 1.
For example, assuming that the target instruction is
\begin{equation*}
z = x * y,
\end{equation*}
after the injection, the resulting operation is:
\begin{equation*}
z = (x + \text{1e-100}) * y.
\end{equation*}
In this example, \textit{OP'} is the addition operation and $\epsilon$ is 1e-100.

Our variability injection framework requires two passes.
The first pass identifies potential \textit{valid injection locations}; an
injection location is defined by a file, function and floating-point instruction
tuple in the program.
The second pass injects in a user-specified location, using
a specific $\epsilon$ and operation \textit{OP'}.
We perform the injections at an early stage during the LLVM optimization step.
Our goal is to introduce variability before optimizations take place.

For our evaluation, we used the benchmark called
Livermore Unstructured Lagrangian Explicit Shock Hydrodynamics (LULESH).
This LULESH benchmark contains 5,459 source lines of code, in which there are
1,094 floating point operations.
For each of these operations, we did four injection runs, one for each possible
\textit{OP'}.

Under our evaluation criteria as seen in Table~\ref{tbl:injection},
we deem a symbol reported by \flit Bisect to be
exact of the source function where the injection occurred;
this occurred 2,690 times.
We also count indirect finds, which is when the source function is not a visible
symbol but Bisect was able to find the visible symbol which used the injected
function, what happened 984 times.
This indirection can occur for several reasons, with the majority coming from functions
which were inlined or otherwise not exported as a strong symbol.
We also count wrong finds and missed finds, which are false positives and false
negatives.
Both of these categories yielded no results in our runs.
The final category is when the injection was not measurable.
A non-measurable result is when the injection did not change the output of
LULESH, which account for 702 of the runs.
A non-measurable result can occur when the injection was in code that was not
run,
or if the injected variation did not affect the final result.

\begin{table}
  \caption{
    Success statistics of the LULESH compiler perturbation injection
    experiment.
    Indirect finds are when the injected function is not in the search space
    but we successfully report the closest global function that calls it.
    Wrong finds are when the reported function does not induce variability.
    Missed finds are when variability occurs, but we do not report the
    functions responsible.
    Not measurable indicates a benign injection.
    }
  \centering
  \begin{tabular}{lr}
    \toprule
    Category         & Count \\
    \midrule
    exact finds      & 2,690 \\
    indirect finds   &   984 \\
    wrong finds      &     0 \\
    missed finds     &     0 \\
    not measurable   &   702 \\
    \midrule
    total            & 4,376 \\
    \bottomrule
  \end{tabular}
  \Description[
    Success statistics of the LULESH compiler perturbation injection experiment
  ]{
    Table of success statistics of the LULESH compiler perturbation injection
    experiment.
    There were 4,376 total experiment runs.
    Of them, there were
    2,690 exact function finds,
    984 indirect finds,
    zero wrong finds,
    0 missed finds,
    and
    702 injections that showed no variability on the output.
  }
  \label{tbl:injection}
\end{table}

\subsection{MPI Support}

All experiments described in this paper were run sequentially.
However, in Figure~\ref{fig:workflow}, we specify runtime determinism as the
only prerequisite, meaning that
\flit can be extended to run on a
deterministic platform.

Currently, \flit supports deterministic MPI.
To test this path,
we repeated a randomized sampling of the \mfem experiment
with MPI running under 24 processes.

The first step was to give us high confidence that \mfem under MPI is
deterministic for the 19 provided examples.
This evaluation was done by performing 100 executions of each test and checking
the full matrix output for bitwise equivalence.
Unfortunately, only 17 of the 19 tests were able to be easily wrapped so that
the \flit framework could call \texttt{MPI\_Init} and \texttt{MPI\_Finalize}
(tests 17 and 18 could not be
accommodated).
All 17 converted parallel tests passed this verification, so we
have high
confidence that \flit would work well with \mfem under MPI.

Next, we wanted to determine the effects of adding parallelization to the \mfem
examples.
That is to say, how do the results from the parallel execution compare against
the sequential run?
We found that in the 17 used tests, increasing the parallelism changed the
result, as measured by the $\ell_2$ norm of the result.
We believe this is due to increasing or decreasing the grid density when
performing domain decomposition.
Regardless of the reason for the difference, \flit was able to identify this
difference, and if the comparison function can handle different domain sizes,
then it would be able to quantify the variability induced by changing the
parallelism configuration.

Finally, we wanted to verify if the Bisect algorithm can identify the
same files and functions under MPI as it did sequentially.
For this step, we took a single random sample from a successful sequential
Bisect run for each test (except for tests 7, 12, and 19, which had no
successful sequential Bisect runs).
Each random sample was able to isolate the same sets of files and
functions, regardless of the variability introduced by the parallelism.
This approach
may not work all the time
(since the variability produced
by parallelism may cause the code to branch differently).
In the case of \mfem,
this case did not arise;
it is highly encouraging that \flit generates
identical results despite the parallelism.

\section{Related Work}                       
\label{sec:sec-related-work}

\subsection{Reproducibility}

The general areas of floating-point error analysis and result
reproducibility have been receiving a lot of
attention~\cite{SC15-Repro-BOF,balaji2013reproducibility,steyerintel,corden2009consistency,sc16-panel-fp}.
There have also been some efforts in understanding performance and reproducibility
in the setting of GPUs~\cite{nvidia-fp}.
The study of deterministic cross-platform floating point arithmetics
was reported a decade ago in~\cite{christian-seiler} by Seiler.
Our initial work on \flit was inspired by this work.


In \cite{baker-work}, the authors discuss the impact of nonreproducibility
in climate codes.
The tooling they provide (KGEN) is home-grown, not meant for external
use~\cite{paper:kgen}.
Their work does not involve any capability similar to Bisect.
Their focus is on large-scale Fortran support (and
currently \flit does not handle Fortran; it is a straightforward
addition and is future work for us).

A tool called COSFID \cite{paper:cosfid} was used to take climate codes and
analyze them more systematically.
Their work realizes file-level bisection search, albeit
through a single recursive bash script.
Their work does not perform symbol-level bisection to isolate
problems down to individual functions, as we do.


The issue of designing bitwise reproducible applications is
discussed in~\cite{DBLP:conf/ipps/ArteagaFH14}.
Their work focuses on the design of efficient reduction
operators, improving on prior work on deterministic addition.
It does not support capabilities such as  compilations
involving different optimizations, and bisection search.

A recent study has discussed the relative lack
of understanding about floating-point arithmetic amongst
practitioners~\cite{DBLP:conf/ipps/DindaH18}.
The issues we encountered in Laghos (the swap macro that turned out to be
undefined behavior
according to C++, and the non-robust comparison against 0.0)
are both indicative of this observation.
Doug James~et~al. stress the need for widespread education in this
area~\cite{DBLP:journals/corr/JamesWS14}.

\subsection{Performance Tuning}
\label{subsec:related-performance-tuning}

This work implements a very rudimentary performance tuning model of running all
flag combinations within the search space and measure each one.
The novelty here is to allow navigation between performance and
reproducibility.

There is extensive work in the community with more sophisticated performance
tuning techniques.
For example, Profile Guided Optimization (PGO)~\cite{paper:pgo}, also known as Profile-Directed
Feedback (PDF), is implemented in most mainstream
compilers~\cite{%
book:ibm-performance-guide,%
web:msvc-pgo,%
paper:intel-performance,%
paper:gcc-pgo,%
web:clang-manual}.
PGO uses an instrumented compilation to log the places in the code that are
most used and in what order.
This log is then used in a later compilation step to optimize the executable
specifically for that trace.
This approach is useful if you expect your application to follow almost the
same path every time.

Other work has tuned the specific parameters within compilers such as the TACT
tool~\cite{paper:compiler-opt-auto-tuning}.
This tool tunes the internal parameters of the GCC compiler optimizations
for one particular application.
One could take a similar approach with any compiler,
but each would contain its
own internal optimization parameters.

This work primarily focuses on reproducibility and identifying sources of
variability, which at first seems orthogonal to performance.
However, we recognize that one often changes architecture, compiler, or
compiler optimization flags when seeking performance, and it is at these times
that reproducibility can become an issue.
We made an initial attempt to incorporate performance and performance tuning
without detracting from the primary goal of reproducibility.
Involving work from the vast performance tuning community into the \flit work
is left as future work.

\section{Concluding Remarks}                 
\label{sec:conclusion}
The case studies reported in this paper demonstrate that 
porting applications {\em even across today's machines and compilers/flags}
can be quite problematic in the field in terms of result-variability.
For HPC applications developed over decades, 
the problem worsens.
This observation is especially true at
``the end of Moore's law''
where heterogeneity (CPUs, accelerators,
and a plethora of compilers) is the rule and not the exception. 
Our work through \flit
has already impacted state-of-the-art projects at Lawrence Livermore labs, as
we previously described.
Our algorithms have yielded results concerning actual projects, as
well as in the context of fault injection studies
on the LULESH proxy application.

Without tools such as \flit, a programmer
may end up adopting draconian
measures such as prohibiting the project-wide use of optimizations higher
than, say, \texttt{-O2}---something that would be counterproductive.
Tools such as \flit will become increasingly important in supporting
new proposals~\cite{eurollvm-paper-mixing-rounding} for mixing
the use of the fast-math and precise-math modes~\cite{DBLP:books/daglib/0027236} in the same LLVM
compilation.
Such mixings can help relax 
numerical precision in sub-modules where speed matters (and result
variability does not matter as much).
With \flit, one can identify which modules can be
optimized under fast math, thereby supporting the use of these new
LLVM options.

In addition to discovering variability, FLiT can help exercise compiler flag
combinations and discover bugs.
One such bug we discovered during the course of using FLiT involved using
{\tt -Ofast} and {\tt -ffloat-store} has been reported and fixed in
GCC 8.2.0~\cite{ian-gcc-bug-90187}.

We have already begun
applying \flit to popular libraries such as CGAL~\cite{cgal-library}
that find applications in 3D printing and other critical applications.
It was encouraging for us to discover the (relative) ease of integrating
\flit into the building and testing infrastructure of CGAL.
We also have identified specific instances of when it is unsafe to apply higher levels of
optimization, as these can drastically change the computed results (\textit{e.g.}, even discrete
answers such as the number of points on a mesh).
This study also revealed some limitations
of Bisect that we plan to overcome.
As one example,
if an application heavily uses inlining, the granularity
of file bisection search can often reduce to a single file, which is insufficient
for precisely root-causing variability.
Therefore, alternative methods (\textit{e.g.},
dynamic execution based) must be developed.
The community also needs to better address the issue of
communicating the intended levels of optimizations between
developers and users.
Our experience is that without this information, we can
overly optimize an application, only to find it throwing exceptions
or not converging properly.

Going forward, one significant limitation of \flit, namely its inability to
handle application-level non-determinism, must be addressed.
We plan to extend \flit to work under OpenMP, MPI, accelerator/GPU programming, and other forms
of concurrency, with support for result determinization
    provided in an easy-to-use manner.
Where determinization is infeasible, we may have to employ ensemble-based
approaches such as proposed in~\cite{DBLP:journals/corr/abs-1810-13432}.
Last but not least, we will continue to enhance the robustness of \flit.
We continue to maintain the open-source status of \flit, and invite contributions
as well as usage of \flit in others' projects, providing us feedback.

\section{Acknowledgments}
\label{sec:acknowledgments}
This work was
performed under the auspices of the U.S. Department of Energy
by LLNL under contract DE-AC52-07NA27344 (LLNL-CONF-759867),
and supported by NSF CCF 1817073, 1704715.

\bibliographystyle{ACM-Reference-Format}
\bibliography{bisect}

\end{document}